\begin{document}
\pagestyle{plain}
\title{Capacity Region of MISO Broadcast Channel for Simultaneous Wireless Information and Power Transfer
\footnote{S. Luo, J. Xu and T. J. Lim are with the Department of
Electrical and Computer Engineering, National University of
Singapore (e-mail: \mbox{shixin.luo}@u.nus.sg; jiexu.ustc@gmail.com; eleltj@nus.edu.sg).} \footnote{R.
Zhang is with the Department of Electrical and Computer Engineering,
National University of Singapore (e-mail: elezhang@nus.edu.sg). He is
also with the Institute for Infocomm Research, A*STAR, Singapore.}\footnote{This paper has been presented in part at the IEEE International
Conference on Communications (ICC), London, 8-12 June, 2015.}}

\author{Shixin Luo, Jie Xu, Teng Joon Lim, and Rui
Zhang} 

\setlength{\textwidth}{7.1in} \setlength{\textheight}{9.7in}
\setlength{\topmargin}{-0.8in} \setlength{\oddsidemargin}{-0.30in}

\maketitle

\begin{abstract}
This paper studies a multiple-input single-output (MISO) broadcast channel (BC) featuring simultaneous wireless information and power transfer (SWIPT), where a multi-antenna access point (AP) delivers both information and energy via radio signals to multiple single-antenna receivers simultaneously, and each receiver implements either information decoding (ID) or energy harvesting (EH). In particular, pseudo-random sequences that are {\it a priori} known and therefore can be cancelled at each ID receiver are used as the energy signals, and the information-theoretically optimal dirty paper coding (DPC) is employed for the information transmission. We characterize the capacity region for ID receivers, by solving a sequence of weighted sum-rate (WSR) maximization (WSRMax) problems subject to a maximum sum-power constraint for the AP, and a set of minimum harvested power constraints for individual EH receivers. The problem corresponds to a new form of WSRMax problem in MISO-BC with combined maximum and minimum linear transmit covariance constraints (MaxLTCCs and MinLTCCs), which differs from the celebrated capacity region characterization problem for MISO-BC under a set of MaxLTCCs only and is challenging to solve. By extending the general BC-multiple access channel (MAC) duality, which is only applicable to WSRMax problems with MaxLTCCs, and applying the ellipsoid method, we propose an efficient iterative algorithm to solve this problem globally optimally. Furthermore, we also propose two suboptimal algorithms with lower complexity by assuming that the information and energy signals are designed separately. Finally, numerical results are provided to validate our proposed algorithms.
\end{abstract}

\begin{keywords}
Multiple-input multiple-output (MIMO), broadcast channel (BC), dirty paper coding (DPC), capacity region, simultaneous wireless information and power transfer (SWIPT), energy harvesting, uplink-downlink duality.
\end{keywords}

\IEEEpeerreviewmaketitle
\setlength{\baselineskip}{1.3\baselineskip}
\newtheorem{definition}{\underline{Definition}}[section]
\newtheorem{fact}{Fact}
\newtheorem{assumption}{Assumption}
\newtheorem{theorem}{\underline{Theorem}}[section]
\newtheorem{lemma}{\underline{Lemma}}[section]
\newtheorem{corollary}{Corollary}
\newtheorem{proposition}{\underline{Proposition}}[section]
\newtheorem{example}{\underline{Example}}[section]
\newtheorem{remark}{\underline{Remark}}[section]
\newtheorem{algorithm}{\underline{Algorithm}}[section]
\newcommand{\mv}[1]{\mbox{\boldmath{$ #1 $}}}

\section{Introduction}
Wireless energy transfer (WET) using radio frequency (RF) signals is a promising technology to provide perpetual power supplies for sensors, radio-frequency identification (RFID) tags, and other devices with very low power consumption and which are difficult to access \cite{Bimag,Ngmag}. In particular, RF-enabled WET enjoys many practical advantages, such as wide coverage, low production cost, small receiver form factor, and efficient energy multicasting thanks to the broadcast nature of electromagnetic waves. Of course, RF signals have also been widely used as a means for transmitting information. To enable a dual use of RF signals, simultaneous wireless information and power transfer (SWIPT) has become a fast-emerging area of research \cite{varshney08, sahai10, xun13,liang13,rui13,BF,xing13,shi14, Tao12,Hu14,Clerckx13,Otter14,Amin12,SH13, multicast1, multicast2, secure1, secure2}, where hybrid access points (APs) are deployed to simultaneously deliver both energy and information to one or more receivers via RF signals.

The idea of SWIPT was first proposed by Varshney \cite{varshney08}, in which a point-to-point single-antenna additive white Gaussian noise (AWGN) channel for SWIPT was investigated from an information-theoretic standpoint. This work was then extended to frequency-selective AWGN channels in \cite{sahai10}, where a non-trivial tradeoff between information rate and harvested energy was shown by varying power allocation over frequency. Prior works \cite{varshney08, sahai10} have studied the fundamental performance limits of wireless systems with SWIPT, where the receiver is ideally assumed to be able to decode the information and harvest the energy independently from the same received signal. However, this assumption implies that the received signal used for harvesting energy can be reused for decoding information without any loss, which is not realizable yet due to practical circuit limitations. Consequently, in \cite{xun13,rui13}, various practical receiver architectures for SWIPT were proposed, such as time-switching and power-splitting. The authors in \cite{liang13} studied SWIPT for fading AWGN channels subject to time-varying co-channel interference, and proposed a new principle termed ``opportunistic energy harvesting'' where the receiver switches between harvesting energy and decoding information based on the wireless channel condition and interference power level.

The practical implementation of SWIPT is limited by the severe path loss and fading of wireless channels, and multi-antenna processing is an appealing solution to improve the efficiency of both information and energy transfer. Recently, there have been a handful of papers on studying the multi-antenna SWIPT systems under various setups including broadcast channel (BC) \cite{rui13,xing13,BF,shi14,Tao12,secure1,secure2}, multicast system \cite{Hu14, multicast1, multicast2}, interference channel \cite{Otter14,Clerckx13,SH13}, and relay channel \cite{Amin12}. As for the multi-antenna BC, the authors in \cite{rui13} first characterized the rate-energy (R-E) tradeoff for a simplified multiple-input multiple-output (MIMO) BC with two (either separated or co-located) receivers implementing information decoding (ID) and energy harvesting (EH), respectively. The study in \cite{rui13} was then extended to the case with imperfect channel state information (CSI) at the transmitter \cite{Tao12}. Moreover, \cite{BF, xing13} and \cite{shi14} studied the multiple-input single-output (MISO) BC for SWIPT with multiple separated and co-located ID and EH receivers, respectively. In \cite{secure1, secure2}, physical layer security is considered under MISO BC for SWIPT by adding additional secrecy information transmission constraint, which reveal interesting new insights that the energy-carrying signal can also play the role of artificial noise (AN) to ensure secrecy in information transmission. However, all these prior works on multi-antenna BC consider low-complexity linear precoding/beamforming for SWIPT, which is in general suboptimal. Therefore, the fundamental limits on the information and energy transfer in general multi-antenna BC for SWIPT remain unknown, thus motivating this work.

This paper studies a MISO-BC for SWIPT, where a multi-antenna AP delivers both wireless information and energy to multiple receivers each with a single antenna. Each receiver implements either ID or EH alone\footnote{Conventional wireless information and energy receivers are respectively designed to operate with very different power requirements (e.g., an EH
receiver for a low-power sensor requires a received power of $-10$ dBm or more for real-time operation, while ID receivers such as cellular and WiFi mobile receivers often operate with a received power less than $-50$ dBm \cite{rui13}), and thus the existing RF front-end for wireless EH cannot currently be used for ID and vice versa.}. Pseudo-random sequences that are {\it a priori} known and therefore can be cancelled at each ID receiver are used as the energy signals, and the information-theoretically optimal dirty paper coding (DPC) \cite{dirty} is employed for the information transmission. Under this setup, we characterize the fundamental limits on the information and energy transfer of the considered MISO-BC for SWIPT, by establishing the capacity region for the ID receivers while ensuring given minimum energy requirements for EH receivers. Specifically, the capacity region is characterized by solving a sequence of weighted sum-rate (WSR) maximization (WSRMax) problems for all ID receivers subject to a maximum transmit sum-power constraint for the AP, and a set of minimum harvested power constraints for individual EH receivers. Interestingly, these problems belong to a new form of WSRMax problem for MISO-BC with combined maximum and minimum linear transmit covariance constraints (MaxLTCCs and MinLTCCs), which is non-convex in general and thus difficult to be solved optimally by standard convex optimization techniques.

It should be noted that the WSRMax problem with only MaxLTCCs has been investigated in \cite{UDD} to establish the capacity region of multi-antenna BC, in which a general BC-multiple access channel (MAC) duality is applied to solve this problem optimally. However, the WSRMax problem in our case is different and more challenging due to the newly introduced MinLTCCs that arise from the minimum harvested power constraints for the EH receivers. As a result, the general BC-MAC duality does not directly apply here. To overcome this challenge, we propose an efficient algorithm to optimally solve the new WSRMax problem with combined MaxLTCCs and MinLTCCs, by extending the general BC-MAC duality and applying the ellipsoid method. One more side effect of the MinLTCCs is that although the solution generated by the ellipsoid method achieves the optimal WSR, it may not be feasible to the primal problem. This is because the equivalent noise covariance matrix of the dual MAC may not be of full rank, which implies an infinite number of possible solutions for the dual MAC. In this case, a semi-definite program (SDP) needs to be further solved to obtain a primal feasible solution. To the best of our knowledge, our approach is novel and has not been studied in the literature. It is shown that at the optimal solution, the energy signals should be in the null space of all ID receivers' channels (if it is not an empty set). Furthermore, to reduce the implementation complexity of the optimal solution (especially the iterative search with the ellipsoid method), we propose two suboptimal algorithms by separately designing the information and energy signals. Finally, numerical results are provided to validate our proposed algorithms.

The remainder of this paper is organized as follows. Section \ref{sec:system model} introduces the system model and problem formulation. Section \ref{sec:sum power} and Section \ref{sec:sub} present the optimal and suboptimal solutions for the formulated problem, respectively. Section \ref{sec:numerical} provides numerical examples to validate our results. Finally, Section \ref{sec:conclusion} concludes this paper.

{\it Notations}: Boldface letters refer to vectors (lower case) or matrices (upper case). For a square matrix $\mv{S}$, $\text{Tr}(\mv{S})$ and $\mv{S}^{-1}$ denote its trace and inverse, respectively, while $\mv{S}\succeq \mv{0}$, $\mv{S}\preceq \mv{0}$ and $\mv{S} \nsucceq \mv{0}$ mean that $\mv{S}$ is positive semidefinite, negative semidefinite and non-positive semidefinite, respectively. For an arbitrary-size matrix $\mv{M}$, $\mv{M}^{H}$, $\mv{M}^{T}$ and $\mv{M}^\dag$ denote the conjugate transpose, transpose and pseudo-inverse of $\mv{M}$, respectively. The distribution of a circularly symmetric complex Gaussian (CSCG) random vector with mean vector $\mv{x}$ and covariance matrix $\boldsymbol\Sigma$ is denoted by $\mathcal{CN}(\mv{x},\boldsymbol\Sigma)$; and $\thicksim$ stands for ``distributed as''. $\mathbb{C}^{x\times y}$ denotes the space of $x\times y$ complex matrices. $\mathbb{E}[\cdot]$ denotes the statistical expectation. $\|\mv{x}\|$ denotes the Euclidean norm of a complex vector $\mv{x}$, and $|z|$ denotes the magnitude of a complex number $z$.

\section{System Model and Problem Formulation}\label{sec:system model}
\begin{figure}
\centering
\epsfxsize=0.7\linewidth
\includegraphics[width=11cm]{./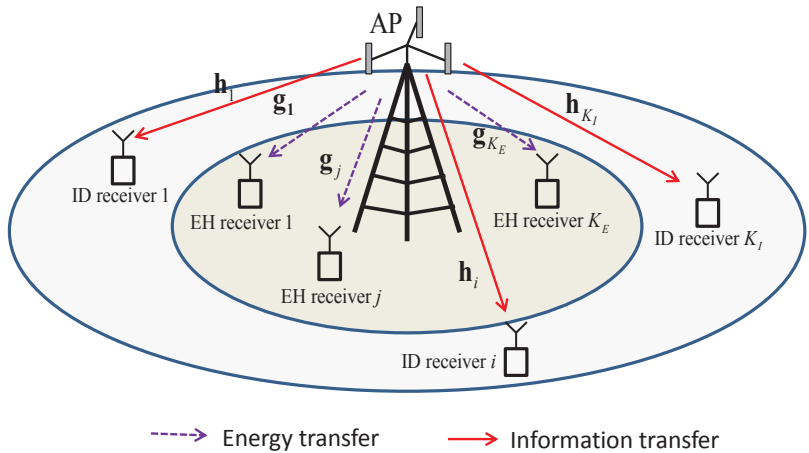}
\caption{A MISO broadcast system for simultaneous wireless information and
power transfer (SWIPT), where EH receivers are close to the AP for effective
energy reception.}
\label{fig:system}
\end{figure}
We consider a MISO-BC for SWIPT with an AP delivering both information and energy to multiple receivers over a single frequency band as shown in Fig. \ref{fig:system}, where each receiver implements either ID or EH. Note that our results apply to arbitrary user locations/channel realizations and there is no restriction on the locations of the EH/ID receivers. The receiver-location-based example in Fig. \ref{fig:system} is made for meeting the practically different received power requirements of EH and ID receivers. In this system, there are $K_I \geq 1$ ID receivers and $K_E \geq 1$ EH receivers, denoted by the sets $\mathcal{K_I}=\{1,\cdots,K_I\}$ and $\mathcal{K_E}=\{1,\cdots,K_E\}$, respectively. It is assumed that all ID and EH receivers are each equipped with one receive antenna, whereas the AP is equipped with $N > 1$ transmit antennas.

We assume a quasi-static channel model, and denote $\mv{h}_i \in \mathbb{C}^{N\times1}$ and $\mv{g}_j \in \mathbb{C}^{N\times1}$ as the channel vectors from the AP to ID receiver $i \in \mathcal{K_I}$ and to EH receiver $j \in \mathcal{K_E}$, respectively. The AP is assumed to perfectly know the instantaneous values of $\mv{h}_i$'s and $\mv{g}_j$'s, while each ID receiver knows its own instantaneous channel. In practice, the CSI of EH receivers can be acquired at the AP by e.g. reverse-link channel estimation based on training signals sent by the EH receivers via exploiting the channel reciprocity in time-division duplex (TDD) systems \cite{erchannel1}, or forward-link channel estimation and limited feedback by the EH receivers in frequency-division duplex (FDD) systems \cite{erchannel2, erchannel3}.

Without loss of generality, the AP transmits $K_I$ independent information signals, i.e., $\mv{x}_i \in \mathbb{C}^{N \times 1}, \forall i \in \mathcal{K_I}$, one for each ID receiver, and one common energy signal\footnote{Since the energy signal does not contain any information, one common energy signal with arbitrary rank covariance is sufficient to achieve the optimal energy transfer performance.}, i.e., $\mv{x}_E \in \mathbb{C}^{N \times 1}$, for all the EH receivers. Thus, the AP transmits the $N$-dimensional complex baseband signal
\begin{align}
\mv{x} = \sum_{i\in\mathcal{K_I}}\mv x_i+ \mv{x}_E
\end{align}
For information signals, we consider Gaussian signalling, and thus $\mv{x}_i$'s are independent and identically distributed (i.i.d.) CSCG vectors with zero mean and covariance matrix $\mv{S}_i \triangleq \mathbb{E}[\mv{x}_i\mv{x}^{H}_i]$, $i \in \mathcal{K_I}$. For the energy signal, since $\mv{x}_E$ does not carry any information, it can be implemented with a set of pseudo-random sequences that mimics a stationary $N$-dimensional random process with zero mean and covariance matrix $\mv{S}_E$.\footnote{Without loss of energy harvesting performance, we assume that the energy signal is pseudo-random instead of a deterministic sinusoidal wave, in order for its power spectral density to satisfy certain regulations on microwave radiation. Specifically, with pseudo-random energy signals, the transmit power spreads evenly over the operating frequency bands, which thus helps avoid a single power spike of the deterministic sinusoidal signal.} Suppose that the maximum sum-power at the AP is denoted by $P_{\text{sum}}>0$. Then we have $\mathbb{E}\left[\mv x^H \mv x\right] = \text{Tr}\left(\sum_{i\in \mathcal{K_I}}\mv{S}_i + \mv{S}_E\right) \le P_{\text{sum}}$.

We consider the information-theoretically optimal DPC for the information transmission, for which the causal interference can be pre-cancelled at the transmitter. To be more specific, consider the encoding order as $\pi(1),\ldots,\pi(K_I)$, i.e., the information signal $\mv x_{\pi(1)}$ for ID receiver $\pi(1)$ is encoded first, that for $\pi(2)$ is encoded second, and so on, where $\pi$ denotes some desired permutation over $\mathcal{K_I}$. In this case, for any ID receiver $\pi(i)$, the causal interference due to ID receivers $\pi(1),\ldots, \pi(i-1)$ can be canceled via DPC at the AP. As a result, the received signal for ID receiver $\pi(i)$ is expressed as
\begin{align}\label{eq:with eh int}
y_{\pi(i)} = \mv{h}_{\pi(i)}^H \mv{x}_{\pi(i)} + \sum_{k=i+1}^{K_I}\mv{h}_{\pi(i)}^H \mv{x}_{\pi(k)} + \mv{h}_{\pi(i)}^H\mv{x}_{E}+ z_i, i\in \mathcal{K_I}
\end{align}
where $z_i\thicksim\mathcal{CN}\left(0,\sigma^2\right)$ denotes the additive white Gaussian noise (AWGN) at the $i$th ID receiver with noise power being $\sigma^2$, and $\mv{h}_{\pi(i)}^H\mv{x}_{E}$ is the interference caused by the common energy signal.

Moreover, since the energy signal $\mv x_E$ is pseudo-random, its resulting interference can be efficiently cancelled by an extra interference cancellation operation at each ID receiver, explained as follows. Without loss of generality, $\mv{x}_E$ can be expressed as $\mv{x}_E = \sum^{L}_{l=1}\mv{v}_ls^{\text{E}}_l$ with $1 \leq L \leq N$ denoting the rank of $\mv{S}_E$, $\mv{v}_l$'s denoting the energy beamforming vectors each with unit norm, and $s^{\text{E}}_l$'s denoting the independently generated pseudo-random energy-bearing signals, whose waveforms can be assumed to be known at both the AP and each ID receiver. Given prior known $s^{\text{E}}_l$'s, in (\ref{eq:with eh int}) we have $\mv{h}_{\pi(i)}^H\mv{x}_{E} = \sum^{L}_{l=1}\left(\mv{h}_{\pi(i)}^H\mv{v}_l\right)s^{\text{EH}}_l$. By estimating the effective channel coefficients $\mv{h}_{\pi(i)}^H\mv{v}_l$'s at ID receiver $\pi(i)$, the resulting interference due to energy signals can be cancelled with known $s^{\text{E}}_l$'s. With the above interference cancellation, the received signal for ID receiver $\pi(i)$ in (\ref{eq:with eh int}) is re-expressed as
\begin{align}
y_{\pi(i)} = \mv{h}_{\pi(i)}^H \mv{x}_{\pi(i)} + \sum_{k=i+1}^{K_I}\mv{h}_{\pi(i)}^H \mv{x}_{\pi(k)} + z_i, i\in \mathcal{K_I}.
\end{align}

With Gaussian signalling employed, the achievable rate region for ID receivers, defined as the rate-tuples for all ID receivers (in bps/Hz) with given information covariance matrices $\{\mv{S}_i\}$, is thus given by \cite{BCCapacity}
\begin{align}
\mathcal{C}_{\text{BC}}&\left(\{\mv{S}_i\},\{\mv{h}_i\}\right) = \bigcup\limits_{\pi \in \boldsymbol\Pi}\left\{\mv{r} \in \mathbb{R}^{K_I}_+ : \right. \nonumber\\
&\left.r_{\pi(i)} \leq \log_2\left(\frac{\sigma^2+\mv{h}^{H}_{\pi(i)}\left(\sum^{K_I}_{k=i}\mv{S}_{\pi(k)}\right)\mv{h}_{\pi(i)}}{\sigma^2+\mv{h}^{H}_{\pi(i)}\left(\sum^{K_I}_{k=i+1}\mv{S}_{\pi(k)}\right)\mv{h}_{\pi(i)}}\right)\right\}
\end{align}
where $\boldsymbol\Pi$ is the collection of all possible permutations over $\mathcal{K_I}$, and $\mv{r} = [r_1,\ldots,r_{K_I}]^T$ denotes the vector of achievable rates for all ID receivers.

On the other hand, consider the WET. Due to the broadcast property of wireless channels, the energy carried by all information and energy signals can be harvested at each EH receiver. As a result, the harvested power for the $j$th EH receiver, denoted by $Q_j$, can be expressed as \cite{rui13}
\begin{align}
Q_j =\mathbb{E}\left[|\mv{g}_j^H \mv{x}|^2\right] = \zeta\text{Tr}\left[\left(\sum_{i\in\mathcal{K_I}} \mv{S}_i+\mv{S}_E\right)\mv{G}_j\right], j\in\mathcal{K_E}
\end{align}
where $0 < \zeta \leq 1$ denotes the energy harvesting efficiency \cite{xun13} at each EH receiver and $\mv{G}_j \triangleq \mv{g}_j\mv{g}_j^{H}, \forall j \in \mathcal{K_E}$.\footnote{Our results still hold when the antenna number of each EH receiver is larger than one, in which case the matrix $\mv{G}_j$'s for EH receivers are of higher rank instead of rank-one.} Since $\zeta$ is a constant, we normalize it as $\zeta = 1$ for simplicity unless otherwise specified.

Now, we are ready to present the optimization problem of interest. To characterize the boundary points of the capacity region for the MISO-BC with SWIPT, we maximize the WSR of all ID receivers subject to the minimum harvested power constraints at individual EH receivers, as well as the maximum sum-power constraint for the AP. By denoting the minimum harvested power requirement at EH receiver $j \in \mathcal{K_E}$ as $E_j > 0$, the WSRMax problem is formulated as
\begin{align}
\mathrm{(P1)}:~&\mathop{\mathtt{Max.}}\limits_{\left\{\mv{S}_i\right\},\mv{r},\mv{S}_E}
~~ \sum_{i\in \mathcal{K_I}}\alpha_ir_i  \label{eq:P1 objective}\\
\mathtt{s.t.}
& ~~ \mv{r} \in \mathcal{C}_{\text{BC}}\left(\{\mv{S}_i\},\{\mv{h}_i\}\right) \label{eq:P1 0}\\
& ~~ \text{Tr}\left[\left(\sum_{i\in\mathcal{K_I}} \mv{S}_i+\mv{S}_E\right)\mv{G}_j\right] \geq E_j, \forall j \in \mathcal{K_E} \label{eq:P1 1}\\
& ~~ \text{Tr}\left(\sum_{i\in\mathcal{K_I}} \mv{S}_i+\mv{S}_E\right) \leq P_{\text{sum}} \label{eq:P1 2}\\
& ~~ \mv{S}_E \succeq \mv{0}, \mv{S}_i \succeq \mv{0}, \forall i \in \mathcal{K_I} \label{eq:P1 3}
\end{align}
where $\alpha_i>0$ denotes a given weight for ID receiver $i \in \mathcal{K_I}$. Note that by solving problem (P1) via exhausting all possible $\{\alpha_i\}$, the whole capacity region can then be characterized. Let $\mathcal{D}$
denote the set containing all admissible information covariance matrices $\{\mv{S}_i\}$ and all achievable rates $\{r_i\}$ specified by the constraints in (\ref{eq:P1 0}) and (\ref{eq:P1 3}). It is then observed that (P1) is non-convex due to the non-convexity of $\mathcal{D}$, and thus the globally optimal solution of (P1) is difficult to obtain in general. Note that one commonly adopted approach to deal with this type of non-convex WSRMax problems for the multi-antenna BC is to use the BC-MAC duality to transform it into an equivalent convex WSRMax problem for a dual MAC \cite{UDD, UDD1, UDD2, UDD3}. However, the existing BC-MAC duality is only applicable to the case of MaxLTCCs\footnote{The MaxLTCC is expressed as $\text{Tr}\left(\mv{S}\mv{Q}\right) \leq P$, where $\mv{S}$ is the transmit covariance matrix to be optimized, $\mv{Q}$ is a given positive semi-definite matrix (which is identify matrix in (\ref{eq:P1 2})), and $P \geq 0$ is a prescribed power constraints. Note that our defined MaxLTCC is the same as the general LTCC (GLTCC) in \cite{UDD}.} with information signals. In contrast, (P1) has both a MaxLTCC in (\ref{eq:P1 2}) and a set of MinLTCCs\footnote{Similar to the MaxLTCC, the MinLTCC is defined as $\text{Tr}\left(\mv{S}\mv{Q}\right) \geq P$.} in (\ref{eq:P1 1}) as well as an energy covariance matrix $\mv{S}_E$. As a result, solving problem (P1) is not a trivial exercise, and has not been investigated yet in the literature. Note that our results are easily extendible to the case with per-antenna individual power constraints for the AP, for which the single MaxLTCC in (\ref{eq:P1 2}) is replaced by a set of MaxLTCCs as in \cite{UDD}. Also note that in this paper we focus on characterizing the fundamental limit of MISO-BC for SWIPT with given user channel realizations. The results can be extended to the general setup with time-varying (fading) channels where the channel capacity/harvested energy can be measured from either ergodic (average) or non-ergodic (outage) perspectives.

Prior to solving problem (P1), we first check its feasibility. It can be observed that (P1) is feasible if and only if its feasibility is guaranteed by ignoring all the ID receivers, i.e., setting $\mv{S}_i = \mv{0}$ and $r_i = 0,\forall i\in\mathcal{K_I}$. Thus, the feasibility of (P1) can be verified by solving the following problem:
\begin{align}
\mathop{\mathtt{find}} &
~~  \mv{S}_E \nonumber \\
\mathtt{s.t.}
& ~~ \text{Tr}\left[\mv{S}_E\mv{G}_j\right] \geq E_j, \forall j \in \mathcal{K_E} \nonumber \\
& ~~ \text{Tr}(\mv{S}_E) \leq P_{\text{sum}}, ~ \mv{S}_E \succeq \mv{0} \label{eq:feasibility}.
\end{align}
Since problem (\ref{eq:feasibility}) is a convex semi-definite program (SDP), it can be solved by standard convex optimization techniques such as the interior point method \cite{Boydbook}. In the rest of this paper, we only focus on the case that (P1) is feasible. In practice, (P1) can be infeasible due to e.g. poor channel conditions, insufficient transmit power or high minimum harvested power constraints. In such cases, the minimum harvested power constraints can be reduced (smaller $E_j$) for some EH receives to make (P1) feasible.

\section{Optimal Solution}\label{sec:sum power}
In this section, we present the optimal solution to problem (P1) by transforming it into a series of equivalent WSRMax sub-problems with a single MaxLTCC and accordingly solving these sub-problems via the BC-MAC duality. Specifically, we first define the following auxiliary function $g(\{\lambda_j\})$ as
\begin{align}
g(\{\lambda_j\}) = &\mathop{\mathtt{Max.}}\limits_{\left\{\mv{S}_i\right\},\mv{r},\mv{S}_E}
~~ \sum_{i \in \mathcal{K_I}}\alpha_ir_i  \label{eq:combined problem}\\
\mathtt{s.t.}
& ~~ \mv{r} \in \mathcal{C}_{\text{BC}}\left(\{\mv{S}_i\},\{\mv{h}_i\}\right) \\
& ~~ \sum_{i\in \mathcal{K_I}}\text{Tr}(\mv{A}\mv{S}_i) + \text{Tr}(\mv{A}\mv{S}_E) \leq P_{A} \label{eq:modified power} \\
& ~~ \mv{S}_E \succeq \mv{0}, \mv{S}_i \succeq \mv{0}, \forall i \in \mathcal{K_I}
\end{align}
where $\lambda_j \geq 0, j \in \{0\} \cup \mathcal{K_E}$ are auxiliary variables, $\mv{A} = \lambda_0\mv{I} - \sum_{j\in \mathcal{K_E}}\lambda_j\mv{G}_j$ and $P_{A} = \lambda_0P_{\text{sum}} - \sum_{j\in \mathcal{K_E}}\lambda_jE_j$. Note that $g(\{\lambda_j\})$ is generally not the dual function of problem (P1); however, it serves as an upper bound on the optimal value of (P1) for any $\{\lambda_j \ge 0\}$. This is because any feasible solution to problem (P1) is also feasible to (\ref{eq:combined problem}), but not necessarily vice versa. We then define the following problem by minimizing $g(\{\lambda_j\})$ over $\{\lambda_j\}$:
\begin{align}
\mathrm{(P2)}:~ \mathop{\mathtt{Min.}}\limits_{\{\lambda_j \geq 0\}} &
~~ g(\{\lambda_j\}).
\end{align}
In general the optimal value of problem (P2) also serves as an upper bound on that of (P1). However, as will be rigorously shown later (see Lemma \ref{lemma:3}), this upper bound is indeed tight. As a result, we will solve (P1) by equivalently solving problem (P2). In the following, we first solve problem (\ref{eq:combined problem}) to obtain $g(\{\lambda_j\})$ under any given $\{\lambda_j\geq0\}$, based on which the strong duality between problems (P1) and (P2) is then proved. Next, we solve problem (P2) to obtain the optimal $\{\lambda_j\}$, and finally, we construct the optimal solution to (P1) based on that to (P2).

\subsection{Solving Problem (\ref{eq:combined problem}) to Obtain $g(\{\lambda_j\})$ }\label{sec:BC MAC}
To start, we present some important properties of problem (\ref{eq:combined problem}) in the following lemma.
\begin{lemma}\label{lemma:1}
In order for problem (\ref{eq:combined problem}) to be feasible and $g(\{\lambda_j\})$ to have an upper-bounded value, i.e., $g(\{\lambda_j\}) < +\infty$, the following conditions must be satisfied:
\begin{enumerate}
\item $\mv{A}$ is positive semi-definite, i.e., $\mv{A} \succeq \mv{0}$.
\item The null space of $\mv{A}$ lies in the null space of $\mv{H} \triangleq \sum_{i\in\mathcal{K_I}} \mv{h}_i\mv{h}^{H}_{i}  \in \mathbb{C}^{N\times N}$, i.e., $\text{Null}(\mv{A}) \subseteq \text{Null}\left(\mv{H}\right)$, where $\text{Null}(\mv{A}) \triangleq \left\{\mv{x} \in \mathbb{C}^{N \times 1}: \mv{A}\mv{x} = \mathbf{0}\right\}$.
\item $P_{A} \geq 0$.
\end{enumerate}
\end{lemma}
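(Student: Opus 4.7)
The plan is to prove each of the three conditions is necessary by contrapositive: if any one fails, then either problem~(\ref{eq:combined problem}) is infeasible or $g(\{\lambda_j\}) = +\infty$. I would handle Condition 3 first (since it is immediate once Condition 1 is in hand), then Condition 1 via a coupled rank-two construction, and finally Condition 2 assuming Conditions 1 and 3 hold.

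Condition 3 is the shortest: assuming $\mv{A} \succeq \mv{0}$ and $P_A < 0$, the left-hand side of (\ref{eq:modified power}) is nonnegative because each trace of a product of two PSD matrices is nonnegative, so the constraint cannot be satisfied and the feasible set is empty. For Condition 1, suppose $\mv{A} \nsucceq \mv{0}$ and pick a unit vector $\mv{v}$ with $-c := \mv{v}^H \mv{A} \mv{v} < 0$. Fix any $i^* \in \mathcal{K_I}$ with $\mv{h}_{i^*} \neq \mv{0}$ and take $\mv{S}_{i^*} = t\,\mv{h}_{i^*}\mv{h}_{i^*}^H + s\,\mv{v}\mv{v}^H$, setting all other $\mv{S}_j$ and $\mv{S}_E$ to $\mv{0}$. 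Both summands are PSD, so $\mv{S}_{i^*} \succeq \mv{0}$ for every $t,s \geq 0$. The LHS of (\ref{eq:modified power}) becomes $t\,\mv{h}_{i^*}^H \mv{A} \mv{h}_{i^*} - sc$, which for any fixed $t$ can be driven below $P_A$ by choosing $s$ large enough. Placing $i^*$ last in the DPC ordering then yields $r_{i^*} = \log_2(1 + \mv{h}_{i^*}^H \mv{S}_{i^*} \mv{h}_{i^*}/\sigma^2) \geq \log_2(1 + t\|\mv{h}_{i^*}\|^4/\sigma^2)$, which tends to $+\infty$ as $t \to \infty$; with $\alpha_{i^*} > 0$ this forces $g(\{\lambda_j\}) = +\infty$.

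For Condition 2, assume Conditions 1 and 3 already hold but $\text{Null}(\mv{A}) \not\subseteq \text{Null}(\mv{H})$. Then there is $\mv{u} \in \text{Null}(\mv{A})$ with $\mv{H}\mv{u} \neq \mv{0}$, which forces $\mv{h}_{i^*}^H \mv{u} \neq 0$ for some $i^* \in \mathcal{K_I}$ since $\mv{H}\mv{u} = \sum_i (\mv{h}_i^H \mv{u})\,\mv{h}_i$. Setting $\mv{S}_{i^*} = t\,\mv{u}\mv{u}^H$ (others zero) gives $\text{Tr}(\mv{A}\mv{S}_{i^*}) = t\,\mv{u}^H \mv{A} \mv{u} = 0 \leq P_A$ for every $t \geq 0$, so the construction is feasible; placing $i^*$ last yields $r_{i^*} = \log_2(1 + t|\mv{h}_{i^*}^H \mv{u}|^2/\sigma^2) \to +\infty$.

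The main obstacle I anticipate is the Condition 1 subcase in which $\mv{v} \in \text{Null}(\mv{H})$: power along $\mv{v}$ alone produces no rate, so the naive construction $\mv{S}_{i^*} = s\,\mv{v}\mv{v}^H$ does not blow up the objective. The rank-two construction resolves this by coupling a rate-free direction $s\,\mv{v}\mv{v}^H$, which relaxes the power constraint at rate $-sc$, with a rate-carrying direction $t\,\mv{h}_{i^*}\mv{h}_{i^*}^H$; the former absorbs the constraint while the latter drives $r_{i^*}$ to infinity. Once this trick is identified, the three cases interlock cleanly.
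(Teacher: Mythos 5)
Your proof is correct and follows essentially the same route as the paper's: each condition is established by contradiction, exhibiting feasible covariances that drive the weighted sum-rate to $+\infty$ (or, for $P_A<0$ with $\mv{A}\succeq\mv{0}$, showing the feasible set is empty). Your rank-two construction coupling a negative-curvature direction of $\mv{A}$ with the rate-carrying direction $\mv{h}_{i^*}$ is in fact more explicit than the paper's eigenbasis argument, and you supply the Condition-2 construction that the paper dismisses as ``similar.''
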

\begin{proof}
See Appendix \ref{appendix:proof lemma 1}.
\end{proof}
From Lemma \ref{lemma:1}, it is sufficient for us to solve (\ref{eq:combined problem}) with $\mv{A}\succeq \mathbf{0}$, $\text{Null}(\mv{A}) \subseteq \text{Null}\left(\mv{H}\right)$ and $P_A\ge 0$.

Suppose that $\text{rank}(\mv{A}) = m$, where $\text{rank}(\mv{H}) \leq m \leq N$ due to the second condition in Lemma \ref{lemma:1}. Then, the singular value decomposition (SVD) of $\mv{A}$ can be expressed as
\begin{align}\label{eq:evd}
\mv{A} = \left[\mv{U}_1, \mv{U}_2\right]\boldsymbol\Lambda \left[\mv{U}_1, \mv{U}_2\right]^{H}
\end{align}
where $\mv{U}_1 \in \mathbb{C}^{N \times m}$ and $\mv{U}_2 \in \mathbb{C}^{N \times (N-m)}$ consist of the first $m$ and the last $N - m$ left singular vectors of $\mv{A}$, which correspond to the non-zero and zero singular values in $\boldsymbol\Lambda$, respectively. Therefore, the vectors in $\mv{U}_1$ and $\mv{U}_2$ form the orthogonal basis for the range and null space of $\mv{A}$, respectively. Then we have the optimal $\mv{S}_E$ for problem (\ref{eq:combined problem}), denoted by $\mv{\bar{S}}_E$, as follows.

\begin{lemma}\label{lemma:2}
The optimal energy covariance matrix in problem (\ref{eq:combined problem}) is expressed as
\begin{align}\label{eq:energy}
\mv{\bar{S}}_E = \mv{U}_2\mv{\bar{E}}\mv{U}_2^{H}
\end{align}
where $\mv{\bar{E}} \in \mathbb{C}^{(N-m)\times (N-m)}$ can be any positive semi-definite matrix. That is, any $\mv{\bar{S}}_E \succeq \mv{0}$ satisfying $\mv{A}\mv{\bar{S}}_E = \mv{0}$ is optimal to problem (\ref{eq:combined problem}). Note that when $m = N$, i.e., $\mv{A}$ is of full rank, $\mv{U}_2$ does not exist. In this case, we have $\mv{\bar{S}}_E = \mv{0}$.
\end{lemma}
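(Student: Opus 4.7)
The plan is to exploit the very limited way in which $\mv{S}_E$ enters problem (\ref{eq:combined problem}). Because the energy signal is pseudo-random and is cancelled at each ID receiver (as established in Section \ref{sec:system model}), the rates and the rate region $\mathcal{C}_{\text{BC}}(\{\mv{S}_i\},\{\mv{h}_i\})$ depend only on $\{\mv{S}_i\}$, so $\mv{S}_E$ does not appear in the objective (\ref{eq:P1 objective}) nor in the rate-region constraint. The only place it enters is the scalar quantity $\text{Tr}(\mv{A}\mv{S}_E)$ on the left-hand side of (\ref{eq:modified power}). So the proof strategy is a straightforward swap argument: among all $\mv{S}_E\succeq\mv{0}$, we want the one that makes this term as small as possible, so that constraint (\ref{eq:modified power}) is maximally relaxed for the choice of $\{\mv{S}_i\}$.

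The first step is to invoke Lemma \ref{lemma:1}, which permits us to restrict attention to $\mv{A}\succeq\mv{0}$, giving $\text{Tr}(\mv{A}\mv{S}_E)\geq 0$ for every $\mv{S}_E\succeq\mv{0}$. The second step is to characterize equality: using the cyclic trace property and the PSD square root of $\mv{A}$, $\text{Tr}(\mv{A}\mv{S}_E)=\text{Tr}(\mv{A}^{1/2}\mv{S}_E\mv{A}^{1/2})$, where the integrand is itself positive semi-definite; a zero trace of a PSD matrix forces it to be $\mv{0}$, whence $\mv{A}^{1/2}\mv{S}_E^{1/2}=\mv{0}$ and therefore $\mv{A}\mv{S}_E=\mv{0}$. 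Equivalently, $\text{Range}(\mv{S}_E)\subseteq\text{Null}(\mv{A})$, and by the SVD (\ref{eq:evd}) the null space of $\mv{A}$ is spanned by the columns of $\mv{U}_2$. Writing $\mv{S}_E$ in that subspace yields exactly the parametrization $\mv{S}_E=\mv{U}_2\bar{\mv{E}}\mv{U}_2^{H}$ with $\bar{\mv{E}}\succeq\mv{0}$.

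The third step closes the argument. Take any optimal triple $(\{\mv{S}_i^\star\},\mv{r}^\star,\mv{S}_E^\star)$ for (\ref{eq:combined problem}) and replace $\mv{S}_E^\star$ by an arbitrary $\mv{U}_2\bar{\mv{E}}\mv{U}_2^{H}$ with $\bar{\mv{E}}\succeq\mv{0}$. Since $\text{Tr}(\mv{A}\mv{U}_2\bar{\mv{E}}\mv{U}_2^{H})=0\leq\text{Tr}(\mv{A}\mv{S}_E^\star)$, constraint (\ref{eq:modified power}) is still satisfied, and the PSD constraint on $\mv{S}_E$ is preserved; the objective is unchanged because it does not involve $\mv{S}_E$. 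Hence the new triple is also optimal, proving that every $\bar{\mv{S}}_E$ of the stated form is optimal, as claimed. The degenerate case $m=N$ (so that $\mv{U}_2$ is empty) collapses to $\bar{\mv{S}}_E=\mv{0}$, consistent with the lemma. The only delicate point is the PSD-trace equivalence $\text{Tr}(\mv{A}\mv{S}_E)=0\Leftrightarrow\mv{A}\mv{S}_E=\mv{0}$; everything else is essentially bookkeeping, and there is no substantive obstacle once Lemma \ref{lemma:1} is in hand.
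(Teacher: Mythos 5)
Your proof is correct and follows essentially the same route as the paper's: invoke Lemma \ref{lemma:1} to get $\mv{A}\succeq\mv{0}$ and hence $\text{Tr}(\mv{A}\mv{S}_E)\geq 0$, note that $\mv{S}_E$ affects neither the objective nor the rate region but only relaxes (\ref{eq:modified power}), and parametrize the zero-trace case via the null space of $\mv{A}$ spanned by $\mv{U}_2$. Your explicit square-root argument for $\text{Tr}(\mv{A}\mv{S}_E)=0\Leftrightarrow\mv{A}\mv{S}_E=\mv{0}$ and the swap argument for sufficiency are slightly more carefully spelled out than in the paper, but the substance is identical.
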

\begin{proof}
See Appendix \ref{appendix:proof lemma 2}.
\end{proof}
Lemma \ref{lemma:2} shows that the optimal energy covariance matrix $\mv{\bar{S}}_E$ of problem (\ref{eq:combined problem}) lies in the null space of $\mv{A}$. By using this result, problem (\ref{eq:combined problem}) can thus be simplified to
\begin{align}
\mathop{\mathtt{Max.}}\limits_{\left\{\mv{S}_i\right\},\mv{r}} &
~~ \sum_{i \in \mathcal{K_I}}\alpha_ir_i  \nonumber \\
\mathtt{s.t.}
& ~~ \mv{r} \in \mathcal{C}_{\text{BC}}\left(\{\mv{S}_i\},\{\mv{h}_i\}\right) \nonumber \\
& ~~ \sum_{i\in \mathcal{K_I}}\text{Tr}(\mv{A}\mv{S}_i) \leq P_{A} \nonumber \\
& ~~ \mv{S}_i \succeq \mv{0}, \forall i \in \mathcal{K_I}. \label{eq:no energy}
\end{align}

Now, it remains to solve (\ref{eq:no energy}) to obtain the optimal information covariance matrices, denoted by $\{\mv{\bar{S}}_i\}$. Note that problem (\ref{eq:no energy}) corresponds to a WSRMax problem in MISO-BC under a single MaxLTCC. For the special case of $\mv{A}$ having full rank, this problem has been solved by the general BC-MAC duality \cite{UDD}. To handle the general case of $\mv{A}$ being rank deficient, which has not been addressed in the literature, we present the following lemma.

\begin{lemma}\label{lemma:4}
The optimal information covariance matrices, i.e., $\left\{\mv{\bar{S}}_i\right\}$, in problem (\ref{eq:no energy}) can be expressed as
\begin{align}\label{eq:optimal structure}
\mv{\bar{S}}_i = \mv{U}_1\mv{\bar{B}}_i\mv{U}^{H}_1 + \mv{U}_1\mv{\bar{C}}_i\mv{U}^{H}_2 &+ \mv{U}_2\mv{\bar{C}}^{H}_i\mv{U}^{H}_1 \nonumber \\
& + \mv{U}_2\mv{\bar{D}}_i\mv{U}^{H}_2, \forall i \in \mathcal{K_I}
\end{align}
where $\mv{\bar{B}}_i\in\mathbb{C}^{m\times m}$ is the unique solution of problem (\ref{eq:full rank}) below, $\mv{\bar{C}}_i \in \mathbb{C}^{m\times (N-m)}$ and $\mv{\bar{D}}_i \in \mathbb{C}^{(N-m)\times (N-m)}$ can be any matrices with appropriate dimensions such that $\mv{\bar{S}}_i \succeq \mv{0}$.
\begin{align}
\mathop{\mathtt{Max.}}\limits_{\left\{\mv{B}_i\right\},\mv{r}} &
~~ \sum_{i \in \mathcal{K_I}}\alpha_ir_i  \nonumber \\
\mathtt{s.t.}
& ~~ \mv{r} \in \mathcal{C}_{\text{BC}}\left(\{\mv{B}_i\},\{\mv{\hat{h}}_i\}\right) \nonumber \\
& ~~ \sum_{i\in \mathcal{K_I}}\text{Tr}(\mv{\hat{A}}\mv{B}_i) \leq P_{A} \nonumber \\
& ~~ \mv{B}_i \succeq \mv{0}, \forall i \in \mathcal{K_I} \label{eq:full rank}
\end{align}
where $\mv{\hat{h}}_i = \mv{U}^{H}_1\mv{h}_i \in \mathbb{C}^{m\times 1}, \forall i \in \mathcal{K_I}$ and $\mv{\hat{A}} = \mv{U}^{H}_1\mv{A}\mv{U}_1 \in \mathbb{C}^{m\times m}$.
\end{lemma}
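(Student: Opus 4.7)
The plan is to exploit the SVD of $\mv{A}$ in (\ref{eq:evd}) to perform a change of basis that decouples the "range-of-$\mv{A}$" components of $\mv{S}_i$ (which determine the objective and the power budget) from the "null-space-of-$\mv{A}$" components (which are invisible to both). Concretely, I would first observe that condition 2 of Lemma \ref{lemma:1} gives $\text{Null}(\mv{A}) \subseteq \text{Null}(\mv{H})$; since $\mv{H} = \sum_i \mv{h}_i \mv{h}_i^H$, every $\mv{h}_i$ must be orthogonal to $\text{Null}(\mv{A}) = \text{Range}(\mv{U}_2)$, so $\mv{U}_2^H \mv{h}_i = \mv{0}$ and consequently $\mv{h}_i = \mv{U}_1 \mv{\hat h}_i$ with $\mv{\hat h}_i = \mv{U}_1^H \mv{h}_i$.

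Next, since $[\mv{U}_1,\mv{U}_2]$ is unitary, any Hermitian $\mv{S}_i$ admits the unique block decomposition
\begin{align}
\mv{S}_i = \mv{U}_1\mv{B}_i\mv{U}_1^{H} + \mv{U}_1\mv{C}_i\mv{U}_2^{H} + \mv{U}_2\mv{C}_i^{H}\mv{U}_1^{H} + \mv{U}_2\mv{D}_i\mv{U}_2^{H},
\end{align}
and $\mv{S}_i \succeq \mv{0}$ is equivalent to the $N\times N$ block matrix with diagonal blocks $\mv{B}_i,\mv{D}_i$ and off-diagonal block $\mv{C}_i$ being PSD. Plugging this into the rate expression and using $\mv{U}_2^H\mv{h}_i=\mv{0}$ collapses every quadratic form $\mv{h}_i^H(\sum_{k\ge\ell}\mv{S}_{\pi(k)})\mv{h}_i$ to $\mv{\hat h}_i^H(\sum_{k\ge\ell}\mv{B}_{\pi(k)})\mv{\hat h}_i$, so $\mathcal{C}_{\text{BC}}(\{\mv{S}_i\},\{\mv{h}_i\}) = \mathcal{C}_{\text{BC}}(\{\mv{B}_i\},\{\mv{\hat h}_i\})$. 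Likewise, because $\mv{A}\mv{U}_2=\mv{0}$, only the $\mv{U}_1\mv{B}_i\mv{U}_1^H$ term contributes to the power constraint, giving $\text{Tr}(\mv{A}\mv{S}_i) = \text{Tr}(\mv{\hat A}\mv{B}_i)$ with $\mv{\hat A} = \mv{U}_1^H\mv{A}\mv{U}_1$.

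These two observations together show that problem (\ref{eq:no energy}) is equivalent to problem (\ref{eq:full rank}) in the $\mv{B}_i$ variables, while $\mv{C}_i$ and $\mv{D}_i$ are free parameters that neither enter the objective nor tighten the constraint; the only requirement on them is the joint PSD condition on the block matrix above, which is trivially met by e.g.\ $\mv{C}_i=\mv{0},\mv{D}_i=\mv{0}$, so the set of optima for (\ref{eq:no energy}) is exactly the parametrization claimed in (\ref{eq:optimal structure}). For the uniqueness of $\mv{\bar B}_i$, the main point is that $\mv{\hat A} = \mv{U}_1^H\mv{A}\mv{U}_1$ is now positive definite on $\mathbb{C}^{m\times m}$, so (\ref{eq:full rank}) is a standard MISO-BC WSRMax problem under a single full-rank MaxLTCC, for which the general BC-MAC duality of \cite{UDD} applies and the optimal $\mv{B}_i$'s are uniquely determined.

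The cleanest organization is therefore: (i) extract $\mv{h}_i = \mv{U}_1\mv{\hat h}_i$ from Lemma \ref{lemma:1}; (ii) perform the $[\mv{U}_1,\mv{U}_2]$ block decomposition of $\mv{S}_i$ and verify the two invariances; (iii) conclude the equivalence with (\ref{eq:full rank}) and cite \cite{UDD} for uniqueness of $\mv{\bar B}_i$. The only subtle point I would be careful about is the \emph{uniqueness} claim: it must be stated with respect to the reduced full-rank problem (\ref{eq:full rank}), not the original (\ref{eq:no energy}), where $\mv{\bar C}_i$ and $\mv{\bar D}_i$ introduce genuine nonuniqueness in $\mv{\bar S}_i$. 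Everything else is a direct consequence of the orthogonality $\mv{U}_1^H\mv{U}_2=\mv{0}$.
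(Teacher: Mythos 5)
Your proposal is correct and follows essentially the same route as the paper's proof: block-decompose $\mv{S}_i$ in the $[\mv{U}_1,\mv{U}_2]$ basis, use $\text{Null}(\mv{A})\subseteq\text{Null}(\mv{H})$ to show that the rates and the constraint $\sum_i\text{Tr}(\mv{A}\mv{S}_i)$ are invariant to $\mv{C}_i$ and $\mv{D}_i$, reduce to the full-rank problem (\ref{eq:full rank}), and invoke the BC-MAC duality of \cite{UDD} for the uniqueness of $\mv{\bar{B}}_i$ (the paper additionally spells out the dual-MAC/MMSE machinery for solving (\ref{eq:full rank}), which you correctly delegate to the citation). Your caveat that uniqueness holds only for the reduced problem, not for $\mv{\bar{S}}_i$ itself, matches the intent of the lemma.
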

\begin{proof}
See Appendix \ref{appendix:proof lemma 4}.
\end{proof}

Note that $\mv{\hat{A}}$ is of full rank, and thus problem (\ref{eq:full rank}) can be solved by the general BC-MAC duality as in \cite{UDD}. By combining Lemmas \ref{lemma:2} and \ref{lemma:4}, we obtain the optimal solution to (\ref{eq:combined problem}).

\begin{remark}\label{remark:1}
Note that if $\mv{A}$ is of full rank, i.e., $m = N$, then $\mv{U}_2$ does not exist. In this case, the optimal solution to (11) is unique and can be expressed as
\begin{align}
\mv{\bar{S}}_i = \mv{U}_1\mv{\bar{B}}_i\mv{U}^{H}_1, \forall i \in \mathcal{K_I} ~ \text{and} ~\mv{\bar{S}}_E = \mv{0}. \label{eq:solution for dual}
\end{align}
However, if $\mv{A}$ is rank deficient, i.e., $m < N$, then $\mv{U}_2$ does exist in general. In this case, there exist infinite sets of optimal solution $\left\{\{\mv{\bar{S}}_i\},\mv{\bar{S}}_E\right\}$ based on Lemmas \ref{lemma:2} and \ref{lemma:4}, and as a result the optimal solution to problem (\ref{eq:combined problem}) is not unique. For simplicity, we employ the specific optimal solution in (\ref{eq:solution for dual}) to solve (\ref{eq:combined problem}) for obtaining $g(\{\lambda_j\})$.
\end{remark}

\subsection{Solving Problem (P2)}\label{sec:2}
In this section, we first prove the strong duality between (P1) and (P2) before solving (P2) to find the optimal $\{\lambda_j\}$ for maximizing $g(\{\lambda_j\})$.
\begin{lemma}\label{lemma:3}
The optimal value of problem (P1) is equal to that of problem (P2).
\end{lemma}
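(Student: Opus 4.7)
The plan is to prove the equality of the optima of (P1) and (P2) by combining (i) the easy direction from the relaxation structure of $g$, which is already implicit in the paper's definition, with (ii) a perturbation/supergradient argument exploiting the concavity of a value function afforded by time-sharing. For weak duality, I would fix any $\lambda_0, \lambda_1, \ldots, \lambda_{K_E} \geq 0$ and note that the combined MaxLTCC (\ref{eq:modified power}) is exactly $\lambda_0 \cdot$(\ref{eq:P1 2}) minus $\sum_{j \in \mathcal{K_E}} \lambda_j \cdot$(\ref{eq:P1 1}); hence every (P1)-feasible tuple lies in the feasible set of (\ref{eq:combined problem}), so the optimum of (P1) is at most $g(\{\lambda_j\})$ for every $\{\lambda_j \geq 0\}$, and minimizing over $\{\lambda_j\}$ delivers the (P2)-value as an upper bound.

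\textbf{Concavity of the perturbed value function.} For the reverse inequality I would introduce the perturbed value function $v(p, \mv{e})$ equal to the optimum of (P1) after replacing $P_{\text{sum}}$ by $p$ and each $E_j$ by $e_j$, so $v(P_{\text{sum}}, \mv{E})$ is the (P1)-value. The central claim is that $v$ is concave on its effective domain and monotone (nondecreasing in $p$, nonincreasing in each $e_j$). Concavity is obtained by a time-sharing argument: given (near-)optimal schemes $(\{\mv{S}_i^{(k)}\}, \mv{S}_E^{(k)}, \mv{r}^{(k)})$ for $(p^{(k)}, \mv{e}^{(k)})$, $k \in \{1,2\}$, using scheme $k$ (with its own DPC encoding order) for a fraction $\theta$ or $1-\theta$ of a long coding horizon achieves the convex combination $\theta \mv{r}^{(1)} + (1-\theta) \mv{r}^{(2)}$ of rate vectors, while the average transmit sum-power and each average harvested power combine linearly with $\theta$. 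Closedness of the DPC-achievable region under such time-sharing then yields
\[
v\bigl(\theta p^{(1)} + (1-\theta) p^{(2)},\; \theta \mv{e}^{(1)} + (1-\theta) \mv{e}^{(2)}\bigr) \geq \theta v(p^{(1)}, \mv{e}^{(1)}) + (1-\theta) v(p^{(2)}, \mv{e}^{(2)}),
\]
and monotonicity is immediate by constraint relaxation.

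\textbf{Supergradient argument and closure of the gap.} Under the standing feasibility of (P1), together with a Slater-type strict interior (guaranteed whenever (\ref{eq:feasibility}) admits strictly feasible $\mv{S}_E$), concavity of $v$ at $(P_{\text{sum}}, \mv{E})$ supplies a supergradient of the form $(\mu_0, -\mv{\mu})$ with $\mu_0 \geq 0$ and $\mv{\mu} \geq \mv{0}$, the sign pattern being forced by monotonicity of $v$. Setting $\lambda_j^{\star} := \mu_j$ for $j \in \{0\} \cup \mathcal{K_E}$, I would pick any tuple $(\{\mv{S}_i\}, \mv{r}, \mv{S}_E)$ feasible for (\ref{eq:combined problem}) with these multipliers, define $p := \text{Tr}(\sum_i \mv{S}_i + \mv{S}_E)$ and $e_j := \text{Tr}((\sum_i \mv{S}_i + \mv{S}_E)\mv{G}_j)$, observe that the tuple is (P1)-feasible at parameters $(p, \mv{e})$ so $\sum_i \alpha_i r_i \leq v(p, \mv{e})$, and apply the supergradient inequality plus direct rearrangement using the definitions of $\mv{A}$ and $P_A$ to obtain
\[
\sum_i \alpha_i r_i \leq v(P_{\text{sum}}, \mv{E}) + \sum_i \text{Tr}(\mv{A}\mv{S}_i) + \text{Tr}(\mv{A}\mv{S}_E) - P_A \leq v(P_{\text{sum}}, \mv{E}),
\]
where the last step uses (\ref{eq:modified power}). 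Taking supremum over all such tuples gives $g(\{\lambda_j^{\star}\}) \leq v(P_{\text{sum}}, \mv{E})$, which combined with weak duality yields the desired equality.

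\textbf{Main obstacle.} The hardest step is justifying the concavity of $v$ via time-sharing, because it requires arguing that alternating two DPC schemes (with possibly different encoding orders) over long coding blocks is genuinely compatible with the per-instance, permutation-indexed description of $\mathcal{C}_{\text{BC}}$ used in the paper, and that the overall achievable region is closed under this operation. This is precisely where the apparent non-convexity of $\mathcal{D}$ is cleanly absorbed without any loss of rate. Once concavity and Slater are in place the supergradient/separation step is standard, and the multiplier $\{\lambda_j^{\star}\}$ produced there coincides with the dual variable that (P2) minimizes over, dovetailing with the algorithmic development of Section~\ref{sec:2}.
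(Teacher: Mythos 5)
Your weak-duality direction is fine (and is the same observation the paper makes right after defining $g(\{\lambda_j\})$), and the supergradient/perturbation-function machinery in your third step is the standard way to convert concavity of a value function into strong duality. The genuine gap is the decisive middle step: concavity of $v(p,\mv{e})$ for problem (P1) \emph{as formulated}. A time-shared mixture of two DPC schemes with different covariance sets is not a feasible point of (P1): constraint (\ref{eq:P1 0}) requires a single set $\{\mv{S}_i\}$ together with a rate vector in $\mathcal{C}_{\text{BC}}(\{\mv{S}_i\},\{\mv{h}_i\})$, and for $K_I\ge 2$ the DPC rates are not concave in the covariances, so the mixture cannot in general be re-expressed this way. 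Your time-sharing argument therefore establishes concavity only of the time-sharing-relaxed value function $v_{\mathrm{TS}}\ge v$, and running your chain with $v_{\mathrm{TS}}$ yields $g(\{\lambda_j^{\star}\})\le v_{\mathrm{TS}}(P_{\text{sum}},\mv{E})$, not $\le v(P_{\text{sum}},\mv{E})$; closing the loop requires $v_{\mathrm{TS}}=v$, i.e., that time sharing does not enlarge the optimal WSR under the combined max/min constraints. That identity is essentially the hidden-convexity content of Lemma \ref{lemma:3} itself, so as written the proof defers (or circularly assumes) exactly the hard part that you flag as the ``main obstacle.'' A secondary, minor point: you invoke a Slater-type strict feasibility to get a supergradient at $(P_{\text{sum}},\mv{E})$, whereas the paper only assumes feasibility of (P1).

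The paper closes this gap by a different route that supplies the missing convexity surrogate from the BC-MAC duality of \cite{UDD}: it fixes the encoding order (problem (P1F)), takes the KKT multipliers $\tilde{\theta}_0,\{\tilde{\theta}_j\}$ of (P1F), sets $\lambda_0=\tilde{\theta}_0$, $\lambda_j=\tilde{\theta}_j$, and verifies that $\tilde{\mv{S}}_I$ plus the range-space projection of $\tilde{\mv{S}}_E$ satisfies the KKT conditions of the single-MaxLTCC problem (\ref{eq:no energy}); since for that problem the duality gap is zero and the KKT conditions are sufficient (by the general BC-MAC duality results), $g(\{\tilde{\theta}_j\})$ equals the value of (P1F), which sandwiches the value of (P1) between a lower bound (P1F) and the upper bound $\min_{\{\lambda_j\}} g(\{\lambda_j\})$. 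To repair your argument you would need an independent proof that the joint rate/power/harvested-energy region achievable by DPC with a single covariance set is convex (equivalently $v=v_{\mathrm{TS}}$, or concavity of $v$ directly), and any such proof known to date routes through the same dual-MAC convexification the paper uses; your approach works as stated only in the $K_I=1$ case, where the rate is a concave function of the covariances and concavity of $v$ is immediate.
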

\begin{proof}
See Appendix \ref{appendix:proof lemma 3}.
\end{proof}

Next, we proceed to solve (P2). Since $g(\{\lambda_j\})$ is upper bounded only when the conditions in Lemma \ref{lemma:1} are satisfied, we can rewrite (P2) as follows by adding these conditions as explicit constraints.
\begin{align}
\mathrm{(P3)}:~ \mathop{\mathtt{Min.}}\limits_{\{\lambda_j \geq 0\}} &
~~ g(\{\lambda_j\}) \label{eq:p22}\\
\mathtt{s.t.}
& ~~ \text{Null}(\mv{A}) \subseteq \text{Null}\left(\mv{H}\right) \label{eq:p3 1}\\
& ~~  \lambda_0\mv{I} - \sum^{K_E}_{j=1}\lambda_j\mv{G}_j \succeq \mv{0} \label{eq:p3 2}\\
& ~~ \lambda_0P_{\text{sum}} - \sum^{K_E}_{j=1}\lambda_jE_j \geq 0 \label{eq:p3 3}.
\end{align}
Note that for problem (P3), the objective function $g(\{\lambda_j\})$ is not necessarily differentiable. Nonetheless, we have the following lemma.
\begin{lemma}\label{lemma:6}
For the function $g(\{\lambda_j\})$ at any two non-negative points $[\dot{\lambda}_0,\dot{\lambda}_1,\cdots,\dot{\lambda}_{K_E}]$ and $[\ddot{\lambda}_0,\ddot{\lambda}_1,\cdots,\ddot{\lambda}_{K_E}]$, we have
\begin{align}
&g(\{\dot{\lambda}_j\}) \geq g(\{\ddot{\lambda}_j\}) + c\left[P_{\text{sum}} - \text{Tr}(\mv{\ddot{S}}_I),\text{Tr}(\mv{\ddot{S}}_I\mv{G}_1)-E_1,\cdots, \right.\nonumber \\
&\left.\text{Tr}(\mv{\ddot{S}}_I\mv{G}_{K_E})-E_{K_E}\right] \left([\dot{\lambda}_0,\dot{\lambda}_1,\cdots,\dot{\lambda}_{K_E}] - [\ddot{\lambda}_0,\ddot{\lambda}_1,\cdots,\ddot{\lambda}_{K_E}]\right)^{T}
\end{align}
where $\mv{\ddot{S}}_I = \sum_{i\in \mathcal{K_I}} \mv{\ddot{S}}_i$ with $\{\mv{\ddot{S}}_i\}$ being the optimal solution of problem (\ref{eq:combined problem}) given $\lambda_j = \ddot{\lambda}_j, j = 0,1,\cdots,K_E$, and $c \geq 0$ is a constant.
\end{lemma}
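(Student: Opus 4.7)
The plan is to exploit the optimality of $\{\mv{\ddot{S}}_i\}$ at $\ddot{\boldsymbol\lambda}$ through a two-step reduction: first, complementary slackness of the combined LTCC identifies $\mv{v}(\mv{\ddot{S}}_I)$ as a ``sensitivity vector'' satisfying $\ddot{\boldsymbol\lambda}^T\mv{v}(\mv{\ddot{S}}_I)=0$; second, a feasibility argument transfers $\mv{\ddot{S}}_I$ from $\ddot{\boldsymbol\lambda}$ to $\dot{\boldsymbol\lambda}$ whenever the resulting slack is non-negative, producing the desired lower bound on $g(\dot{\boldsymbol\lambda})$.

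To start, I would adopt the convention $\mv{\ddot{S}}_E=\mv{0}$ from Remark~\ref{remark:1}, so that the only $\boldsymbol\lambda$-dependent constraint of problem (\ref{eq:combined problem}) at $\ddot{\boldsymbol\lambda}$ is the single combined LTCC $\text{Tr}(\mv{A}(\ddot{\boldsymbol\lambda})\mv{\ddot{S}}_I)\leq P_A(\ddot{\boldsymbol\lambda})$. Using the SVD in (\ref{eq:evd}) together with the null-space containment of Lemma~\ref{lemma:1}, perturbations of $\mv{\ddot{S}}_I$ along $\mv{U}_2$ leave both the rate tuple and the LTCC left-hand side unchanged, whereas perturbations along $\mv{U}_1$ strictly improve the weighted sum-rate at the cost of increasing the left-hand side; optimality therefore forces the LTCC to be active, i.e.,
\begin{equation*}
\ddot{\boldsymbol\lambda}^T\mv{v}(\mv{\ddot{S}}_I)=0,\qquad \mv{v}(\mv{S})\triangleq\big[P_{\text{sum}}-\text{Tr}(\mv{S}),\,\text{Tr}(\mv{S}\mv{G}_1)-E_1,\ldots,\text{Tr}(\mv{S}\mv{G}_{K_E})-E_{K_E}\big]^T.
\end{equation*}
The right-hand side of the claimed inequality then collapses to $g(\ddot{\boldsymbol\lambda})+c\,\dot{\boldsymbol\lambda}^T\mv{v}(\mv{\ddot{S}}_I)$, and the quantity $\dot{\boldsymbol\lambda}^T\mv{v}(\mv{\ddot{S}}_I)=P_A(\dot{\boldsymbol\lambda})-\text{Tr}(\mv{A}(\dot{\boldsymbol\lambda})\mv{\ddot{S}}_I)$ is recognizable as the slack of the candidate $\mv{\ddot{S}}_I$ against the LTCC evaluated at $\dot{\boldsymbol\lambda}$.

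I would then split the analysis by the sign of this slack. If $\dot{\boldsymbol\lambda}^T\mv{v}(\mv{\ddot{S}}_I)\geq 0$, the full inner optimizer $(\{\mv{\ddot{S}}_i\},\mv{\ddot{r}},\mv{S}_E{=}\mv{0})$ remains feasible for problem (\ref{eq:combined problem}) at $\dot{\boldsymbol\lambda}$, because the BC rate-region constraint itself does not depend on $\boldsymbol\lambda$; this immediately yields $g(\dot{\boldsymbol\lambda})\geq\sum_i\alpha_i\ddot{r}_i=g(\ddot{\boldsymbol\lambda})$, and the lemma's inequality is satisfied with $c=0$. If $\dot{\boldsymbol\lambda}^T\mv{v}(\mv{\ddot{S}}_I)<0$, either $g(\dot{\boldsymbol\lambda})\geq g(\ddot{\boldsymbol\lambda})$ (and $c=0$ again works), or else the explicit choice $c=\big(g(\ddot{\boldsymbol\lambda})-g(\dot{\boldsymbol\lambda})\big)\big/\big|\dot{\boldsymbol\lambda}^T\mv{v}(\mv{\ddot{S}}_I)\big|\geq 0$ converts the inequality into an equality by direct substitution.

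The hard part is conceptual rather than technical. Because $g$ is degree-$0$ homogeneous along each ray through the origin, it is not convex on its effective domain, and consequently no single ``subgradient'' such as $\ddot{\mu}\,\mv{v}(\mv{\ddot{S}}_I)$ (with $\ddot{\mu}$ the inner-problem dual multiplier) can satisfy the inequality for every $\dot{\boldsymbol\lambda}$ at once; the constant $c$ in the statement must therefore be read as potentially depending on the pair $(\ddot{\boldsymbol\lambda},\dot{\boldsymbol\lambda})$. The genuine operational content of the lemma lies in the first of the two cases above---the containment $\{\boldsymbol\lambda:\mv{v}(\mv{\ddot{S}}_I)^T(\boldsymbol\lambda-\ddot{\boldsymbol\lambda})\geq 0\}\subseteq\{\boldsymbol\lambda:g(\boldsymbol\lambda)\geq g(\ddot{\boldsymbol\lambda})\}$---which certifies $\mv{v}(\mv{\ddot{S}}_I)$ as a valid cutting-plane direction, precisely the property that the ellipsoid-method solution of (P3) will invoke in the sequel.
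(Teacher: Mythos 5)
Your proof is correct under the reading of Lemma~\ref{lemma:6} that is actually needed later (and that the paper's cited proof also delivers), namely that $c\ge 0$ may depend on the two points, but it takes a genuinely different route from the paper. The paper's (omitted) proof is the Lagrangian argument of \cite[Prop.~6]{UDD}: let $\dot\beta\ge 0$ be the optimal dual variable of the single trace constraint of problem (\ref{eq:combined problem}) at $\dot{\boldsymbol\lambda}$; by the zero-duality-gap results for that inner problem, $g(\{\dot\lambda_j\})=\max\left\{\sum_i\alpha_i r_i-\dot\beta\left(\text{Tr}\left(\mv{A}(\dot{\boldsymbol\lambda})(\mv{S}_I+\mv{S}_E)\right)-P_A(\dot{\boldsymbol\lambda})\right)\right\}$, and evaluating the bracket at the $\ddot{\boldsymbol\lambda}$-optimizer together with $\mv{v}(\mv{\ddot{S}}_I)^T\ddot{\boldsymbol\lambda}\ge 0$ yields the inequality with $c=\dot\beta$. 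You replace duality by a feasibility transfer: since the BC rate-region constraint does not involve $\boldsymbol\lambda$, the point $(\{\mv{\ddot{S}}_i\},\mv{\ddot{r}},\mv{S}_E=\mv{0})$ is feasible at $\dot{\boldsymbol\lambda}$ exactly when $\mv{v}(\mv{\ddot{S}}_I)^T\dot{\boldsymbol\lambda}\ge 0$, which proves the half-space property with $c=0$---precisely the cutting-plane certificate the ellipsoid method uses---while in the complementary half-space your choice of $c$ simply makes the inequality hold and carries no information (as you acknowledge). What the paper's route buys is a $c$ tied to problem data (the inner dual multiplier), which is what Remark~\ref{remark:2} leans on when identifying $\mv{v}(\mv{\ddot{S}}_I)$ with the exact subgradient of the convex dual function of (P1); what your route buys is elementarity, since no strong-duality claim for the non-convex inner problem is required to justify the cut. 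Two minor remarks: your activeness claim $\ddot{\boldsymbol\lambda}^T\mv{v}(\mv{\ddot{S}}_I)=0$ needs the mild non-degeneracy that some $\mv{h}_i\neq\mv{0}$ (with $\alpha_i>0$), but activeness is not actually needed---feasibility already gives $\mv{v}(\mv{\ddot{S}}_I)^T\ddot{\boldsymbol\lambda}\ge 0$, which suffices in both of your cases; and your observation that no query-point-independent $c$ can work is consistent with the paper, whose own $c=\dot\beta$ also varies with $\dot{\boldsymbol\lambda}$.
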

\begin{proof}
The proof is similar to that of \cite[Proposition 6]{UDD}, and is thus omitted for brevity.
\end{proof}
Lemma \ref{lemma:6} ensures that compared to the arbitrary point $[\ddot{\lambda}_0,\ddot{\lambda}_1,\cdots,\ddot{\lambda}_{K_E}]$, the optimal point that minimizes $g(\{\lambda_j\})$ cannot belong to the set of points $[\dot{\lambda}_0,\dot{\lambda}_1,\cdots,\dot{\lambda}_{K_E}]$ with
\begin{align}
&\left[P_{\text{sum}} - \text{Tr}(\mv{\ddot{S}}_I),\text{Tr}(\mv{\ddot{S}}_I\mv{G}_1)-E_1,\cdots,\text{Tr}(\mv{\ddot{S}}_I\mv{G}_{K_E})-E_{K_E}\right] \nonumber \\
&\cdot\left([\dot{\lambda}_0,\dot{\lambda}_1,\cdots,\dot{\lambda}_{K_E}] - [\ddot{\lambda}_0,\ddot{\lambda}_1,\cdots,\ddot{\lambda}_{K_E}]\right)^{T} > 0
\end{align}
and thus this set should be eliminated when searching for the optimal $\{\lambda_j\}$. This property motivates us to use the ellipsoid method \cite{Boyd2} to solve problem (P3). In order to successfully implement the ellipsoid method, we need to further obtain the sub-gradients for the constraints $\text{Null}(\mv{A}) \subseteq \text{Null}\left(\mv{H}\right)$ in (\ref{eq:p3 1}) and $\lambda_0\mv{I} - \sum^{K_E}_{j=1}\lambda_j\mv{G}_j \succeq \mv{0}$ in (\ref{eq:p3 2}), which are shown in the following two lemmas.

\begin{lemma}\label{lemma:5}
The constraint in (\ref{eq:p3 1}) is equivalent to the following linear constraints
\begin{align}
f_l(\{\lambda_j\}) \triangleq - \lambda_0 + \sum^{K_E}_{j=1}\lambda_j|\mv{v}^{H}_l\mv{g}_j|^2 < 0, \forall l \leq t \label{eq:P4 1}
\end{align}
where $t$ denotes the rank of matrix $\mv{H}$, and $\mv{v}_l, l=1,\cdots,t$, denote the $t$ left singular vectors of $\mv{H}$ corresponding to its non-zero singular values. As a result, the sub-gradient of $f_l(\{\lambda_j\})$ at given $\{\lambda_j\}$ can be expressed as $\left[-1, |\mv{v}^{H}_l\mv{g}_1|^2,\cdots,|\mv{v}^{H}_l\mv{g}_{K_E}|^2\right]^T$, $l=1,\cdots,t$.
\end{lemma}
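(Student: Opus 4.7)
The plan is to reduce the subspace containment in (\ref{eq:p3 1}) to a finite list of affine scalar inequalities indexed by the non-zero singular vectors of $\mv{H}$, then to read off the sub-gradient of each such inequality from its affine form. Throughout I exploit that $\mv{A} \succeq \mv{0}$, which is imposed by constraint (\ref{eq:p3 2}); this is what allows me to use the standard PSD identities $\text{Null}(\mv{A}) = \{\mv{x}: \mv{x}^H\mv{A}\mv{x}=0\}$ and $\text{Range}(\mv{A}) = \text{Null}(\mv{A})^\perp$ freely.

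First I decompose $\mv{H} = \sum_{l=1}^{t}\sigma_l \mv{v}_l \mv{v}_l^H$ with $\sigma_l > 0$ and $\{\mv{v}_l\}_{l=1}^t$ orthonormal, so $\text{Range}(\mv{H}) = \text{span}\{\mv{v}_1,\ldots,\mv{v}_t\}$. The forward direction then goes as follows: if (\ref{eq:p3 1}) holds, fix $l \leq t$; since $\mv{H}\mv{v}_l = \sigma_l \mv{v}_l \neq \mv{0}$ we have $\mv{v}_l \notin \text{Null}(\mv{H})$, hence by the inclusion $\mv{v}_l \notin \text{Null}(\mv{A})$; by PSD this is equivalent to $\mv{v}_l^H \mv{A}\mv{v}_l > 0$. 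Expanding $\mv{A} = \lambda_0 \mv{I} - \sum_j \lambda_j \mv{g}_j \mv{g}_j^H$ gives $\mv{v}_l^H \mv{A} \mv{v}_l = \lambda_0 - \sum_j \lambda_j |\mv{v}_l^H \mv{g}_j|^2 = -f_l(\{\lambda_j\})$, so $f_l < 0$, as claimed.

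For the converse I would argue contrapositively: if (\ref{eq:p3 1}) fails, pick $\mv{x} \in \text{Null}(\mv{A})\setminus \text{Null}(\mv{H})$, decompose it in the SVD basis of $\mv{H}$, and use the explicit identity $\lambda_0 \mv{x} = \sum_j \lambda_j (\mv{g}_j^H\mv{x})\mv{g}_j$ implied by $\mv{A}\mv{x} = \mv{0}$ to locate an index $l\le t$ for which $\mv{A}\mv{v}_l = \mv{0}$, i.e.\ $\mv{v}_l^H\mv{A}\mv{v}_l = 0$, i.e.\ $f_l = 0$. This step is the main obstacle: the bare implication ``$\mv{v}_l^H \mv{A}\mv{v}_l > 0$ for each $l$'' is strictly weaker than $\mv{A}$ being positive definite on $\text{Range}(\mv{H})$, so the argument has to exploit the rank-one subtraction structure of $\mv{A}$ (and not merely its PSD-ness) to close the gap; I would expect the proof to invoke the fact that $\text{Null}(\mv{A})\subseteq \text{span}\{\mv{g}_j\}$ when $\lambda_0 > 0$ as a key intermediate reduction.

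Once the equivalence is in hand, the sub-gradient assertion is immediate: each $f_l$ is an affine function of the $(K_E+1)$-vector $(\lambda_0,\lambda_1,\ldots,\lambda_{K_E})$ with coefficient $-1$ on $\lambda_0$ and $|\mv{v}_l^H\mv{g}_j|^2$ on $\lambda_j$ for $j\geq 1$, so its unique sub-gradient at every feasible point is exactly the coefficient vector $[-1,\,|\mv{v}_l^H\mv{g}_1|^2,\ldots,|\mv{v}_l^H\mv{g}_{K_E}|^2]^T$ stated in the lemma.
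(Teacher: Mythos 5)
Your forward direction and your reading of the sub-gradient are correct, and they in fact constitute everything the paper's own proof contains: the appendix simply observes that $\text{Null}(\mv{A}) \subseteq \text{Null}(\mv{H})$ forces $\mv{v}_l \nsubseteq \text{Null}(\mv{A})$ for every $l\le t$, invokes $\mv{A}\succeq\mv{0}$ to turn this into $\mv{v}_l^H\mv{A}\mv{v}_l>0$, i.e. $f_l<0$, and stops there; the converse is asserted (via the word ``equivalent'') but never argued. Your expansion $\mv{v}_l^H\mv{A}\mv{v}_l=\lambda_0-\sum_j\lambda_j|\mv{v}_l^H\mv{g}_j|^2=-f_l$ and the observation that each $f_l$ is affine with coefficient vector $[-1,|\mv{v}_l^H\mv{g}_1|^2,\cdots,|\mv{v}_l^H\mv{g}_{K_E}|^2]^T$ match the paper exactly.

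The gap is the converse, which you flagged but left open, and your proposed route to close it would fail: from $\mv{x}\in\text{Null}(\mv{A})\setminus\text{Null}(\mv{H})$ one cannot in general ``locate an index $l\le t$ with $\mv{A}\mv{v}_l=\mv{0}$.'' Concretely, take $N=2$, $K_I=K_E=1$, $\mv{h}_1=(\mv{e}_1+\mv{e}_2)/\sqrt{2}$ (so $t=1$, $\mv{v}_1=\mv{h}_1$), $\mv{g}_1=\mv{e}_2$, $\lambda_0=\lambda_1=1$, where $\mv{e}_1,\mv{e}_2$ are the standard basis vectors: then $\mv{A}=\mv{I}-\mv{e}_2\mv{e}_2^H\succeq\mv{0}$ and $f_1=-1+|\mv{v}_1^H\mv{g}_1|^2=-1/2<0$, yet $\mv{e}_2\in\text{Null}(\mv{A})$ while $\mv{H}\mv{e}_2\neq\mv{0}$, so (\ref{eq:p3 1}) is violated even though all the scalar constraints hold strictly and $\mv{v}_1^H\mv{A}\mv{v}_1>0$. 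So the claimed equivalence is genuinely only one-directional under the stated hypotheses, your instinct that PSD-ness plus the scalar inequalities is ``strictly weaker'' than the subspace containment is vindicated, and no amount of exploiting the rank-one structure can rescue the contrapositive as you sketched it. What survives---and is all the algorithm needs---is precisely the direction you did prove: every $\{\lambda_j\}$ satisfying (\ref{eq:p3 1}) and (\ref{eq:p3 2}) satisfies $f_l<0$ for all $l$, so a violated $f_l$ yields a valid cutting plane with the stated (sub-)gradient, and no feasible point of (P3) is ever cut. In short, your proposal is as strong as the paper's proof where the paper actually gives an argument, is more candid about the missing direction, but does not (and cannot, as stated) establish the full equivalence claimed in the lemma.
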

\begin{proof}
See Appendix \ref{appendix:proof lemma 7}.
\end{proof}
\begin{lemma}\label{lemma:7}
Define $\mv{F}(\{\lambda_j\})= - \lambda_0\mv{I} + \sum^{K_E}_{j=1}\lambda_j\mv{G}_j$. Then the constraint in (\ref{eq:p3 2}) is equivalent to $\mv{F}(\{\lambda_j\}) \preceq \mv{0}$. Let $\mv{z}$ denote the dominant eigenvector of $\mv{F}(\{\lambda_j\})$, i.e., $\mv{z} = \arg\max\limits_{\|\mv{z}\|=1}\mv{z}^{H}\mv{F}(\{\lambda_j\})\mv{z}$. Then, the sub-gradient of $\mv{F}(\{\lambda_j\})$ at given $\{\lambda_j\}$ is $\left[-\|\mv{z}\|^2,\mv{z}^{H}\mv{G}_1\mv{z},\cdots,\mv{z}^{H}\mv{G}_{K_E}\mv{z}\right]^T$.
\end{lemma}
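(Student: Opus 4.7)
The first assertion is immediate: constraint~(\ref{eq:p3 2}) reads $\lambda_0\mv{I}-\sum_{j=1}^{K_E}\lambda_j\mv{G}_j\succeq\mv{0}$, and multiplying by $-1$ inverts the generalized inequality, yielding exactly $\mv{F}(\{\lambda_j\})\preceq\mv{0}$. So the two constraints describe the same feasible region and I can work with $\mv{F}\preceq\mv{0}$ henceforth.

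For the sub-gradient, my plan is to reduce the matrix inequality to an equivalent scalar constraint and then exploit convexity. Specifically, $\mv{F}(\{\lambda_j\})\preceq\mv{0}$ holds if and only if $\lambda_{\max}\bigl(\mv{F}(\{\lambda_j\})\bigr)\le 0$. Using the Rayleigh--Ritz variational characterization, this is equivalent to
\begin{align}
\phi(\{\lambda_j\}) \;\triangleq\; \max_{\|\mv{z}\|=1}\mv{z}^{H}\mv{F}(\{\lambda_j\})\mv{z} \;\le\; 0.
\end{align}
The key observation is that, for each fixed $\mv{z}$, the map
\begin{align}
\{\lambda_j\}\;\mapsto\;\mv{z}^{H}\mv{F}(\{\lambda_j\})\mv{z} \;=\; -\lambda_0\|\mv{z}\|^2+\sum_{j=1}^{K_E}\lambda_j\,\mv{z}^{H}\mv{G}_j\mv{z}
\end{align}
is affine in $\{\lambda_j\}$. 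Thus $\phi$ is a pointwise maximum of affine functions, hence convex, and a sub-gradient can be read off by Danskin's theorem from any active maximizer.

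To finish, let $\mv{z}$ be the dominant (unit-norm) eigenvector of $\mv{F}(\{\lambda_j\})$, which is precisely the maximizer in the definition of $\phi$. For any other non-negative vector $\{\lambda'_j\}$, the fact that $\mv{z}$ need not maximize $\mv{z}^H\mv{F}(\{\lambda'_j\})\mv{z}$ gives
\begin{align}
\phi(\{\lambda'_j\}) \;\ge\; \mv{z}^{H}\mv{F}(\{\lambda'_j\})\mv{z},
\qquad
\phi(\{\lambda_j\}) \;=\; \mv{z}^{H}\mv{F}(\{\lambda_j\})\mv{z}.
\end{align}
Subtracting the two and using the affine expression above yields
\begin{align}
\phi(\{\lambda'_j\})-\phi(\{\lambda_j\}) \;\ge\; \bigl[-\|\mv{z}\|^2,\;\mv{z}^{H}\mv{G}_1\mv{z},\ldots,\mv{z}^{H}\mv{G}_{K_E}\mv{z}\bigr]\bigl([\lambda'_0,\ldots,\lambda'_{K_E}]-[\lambda_0,\ldots,\lambda_{K_E}]\bigr)^{T},
\end{align}
which is exactly the defining sub-gradient inequality with the claimed vector. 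The only real subtlety is a notational one, namely that the ``sub-gradient of $\mv{F}(\{\lambda_j\})$'' really refers to a sub-gradient of the scalar constraint function $\phi$ that encodes the matrix inequality; once this identification is made, the rest follows from standard convex-analytic facts about suprema of affine families.
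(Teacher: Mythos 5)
Your proof is correct and follows essentially the same route as the paper: both reduce the LMI to the scalar constraint $\lambda_{\max}\bigl(\mv{F}(\{\lambda_j\})\bigr)\le 0$ via the Rayleigh--Ritz characterization, observe that this is a pointwise maximum of functions affine in $\{\lambda_j\}$, and read off the sub-gradient from the dominant eigenvector as the active maximizer. Your sign bookkeeping in the final inequality is in fact cleaner than the paper's displayed chain, whose intermediate step contains an apparent sign typo.
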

\begin{proof}
Also see Appendix \ref{appendix:proof lemma 7}.
\end{proof}

With Lemma \ref{lemma:6} and the sub-gradients in Lemmas \ref{lemma:5} and \ref{lemma:7} in hand, we can successfully solve problem (P2) by applying the ellipsoid method to update $\{\lambda_j\}$ towards the optimal solution $\{\lambda^{*}_{j}\}$.

\begin{remark}\label{remark:2}
Although we cannot prove the convexity of problem (P2), the convergence of the ellipsoid method can be ensured as explained in the following. The Lagrangian function of problem (P1) can be written as
\begin{align}\label{eq:re 1}
\sum^{K_I}_{i=1}\alpha_ir_i + \sum^{K_E}_{j=1}&\left[\theta_j\left(\text{Tr}\left[\left(\sum_{i \in \mathcal{K_I}}\mv{S}_i+\mv{S}_E\right)\mv{G}_j\right]-E_j\right)\right. \nonumber \\ &\left.- \theta_0\left(\text{Tr}\left[\sum_{i \in \mathcal{K_I}}\mv{S}_i+\mv{S}_E\right]-P_{\text{sum}}\right)\right]
\end{align}
where $\theta_0$ and $\{\theta_j\}^{K_E}_{j=1}$ are the Lagrange multipliers with respect to the constraints in (\ref{eq:P1 2}) and (\ref{eq:P1 1}), respectively. On the other hand, the Lagrangian function of problem (\ref{eq:combined problem}) can be written as
\begin{align}\label{eq:re 2}
&\sum^{K_I}_{i=1}\alpha_ir_i -\beta \left[\lambda_0\text{Tr}\left[\sum_{i \in \mathcal{K_I}}\mv{S}_I+\mv{S}_E\right] \right. \nonumber \\ &\left.- \sum^{K_E}_{j=1}\lambda_j\text{Tr}\left[\left(\sum_{i \in \mathcal{K_I}}\mv{S}_I+\mv{S}_E\right)\mv{G}_j\right] -\lambda_0P_{\text{sum}} + \sum^{K_E}_{j=1}\lambda_jE_j\right]
\end{align}
where $\beta$ is the Lagrange variable associated with the constraint of (\ref{eq:modified power}). By observing (\ref{eq:re 1}) and (\ref{eq:re 2}), we can see that the two Lagrangian functions are identical to each other if we choose $\theta_j = \beta\lambda_j, \forall j \in \mathcal{K_E}$. Thus, the auxiliary variables $\{\lambda_j\}$ can be viewed as the scaled (by a factor of $1/\beta$) Lagrange dual variables of problem (P1). Correspondingly, $g(\{\lambda_j\})$ is related to the dual function of problem (P1), which is known to be convex. However, since the optimal dual solution for $\beta$ in problem (\ref{eq:combined problem}) varies with $\{\lambda_j\}$, $g(\{\lambda_j\})$, it is not necessarily a convex function. Nevertheless, the above relationship reveals that in Lemma \ref{lemma:6}, the vector $\left[P_{\text{sum}} - \text{Tr}(\mv{\ddot{S}}_I),\text{Tr}(\mv{\ddot{S}}_I\mv{G}_1)-E_1,\cdots,\text{Tr}(\mv{\ddot{S}}_I\mv{G}_{K_E})-E_{K_E}\right]$ is indeed the exact sub-gradient for the convex dual function of problem (P1), given the fact that $\mv{\ddot{S}}_E$ with $\text{Tr}(\mv{A}\mv{\ddot{S}}_E) = 0$ is optimal to problem (\ref{eq:combined problem}) from Lemma \ref{lemma:2}. Thus, the convergence of the ellipsoid method based on this sub-gradient is guaranteed.
\end{remark}

\subsection{Finding Primal Optimal Solution to (P1)}\label{sec:3}
So far, we have obtained the optimal solution to (P2), i.e., $\{\lambda_j^*\}$, as well as the corresponding optimal solution to (\ref{eq:combined problem}) given in (\ref{eq:solution for dual}). According to Remark \ref{remark:1}, if $\mv{A}^{*}\triangleq \lambda^{*}_0\mv{I} - \sum^{K_E}_{j=1}\lambda^{*}_j\mv{G}_j$ is of full rank, (\ref{eq:solution for dual}) is the unique solution to (\ref{eq:combined problem}), which is thus optimal to (P1). However, if $\mv{A}^{*}$ is not of full rank, (\ref{eq:solution for dual}) is not the unique solution to (\ref{eq:combined problem}), and thus may not meet the minimum harvested power constraints in (\ref{eq:P1 1}). In the latter case, we need to find one feasible (thus optimal) solution of (P1), denoted by $\left\{\{\mv{S}^{*}_i\},\mv{S}^{*}_E\right\}$, from all the optimal solutions of (\ref{eq:combined problem}) given in (\ref{eq:energy}) and (\ref{eq:optimal structure}) with $\{\lambda_j^*\}$.

Denote the SVD of $\mv{A}^*$ as $\left[\mv{U}^{*}_1, \mv{U}^{*}_2\right]\boldsymbol\Lambda^{*} \left[\mv{U}^{*}_1, \mv{U}^{*}_2\right]^{H}$. Then following (\ref{eq:energy}) and (\ref{eq:optimal structure}), we can write the information and energy covariance matrices as
\begin{align}
\mv{S}_i = \mv{U}^{*}_1\mv{B}^{*}_i\left(\mv{U}^{*}_1\right)^{H} &+ \mv{U}^{*}_1\mv{C}_i\left(\mv{U}^{*}_2\right)^{H} + \mv{U}^{*}_2\mv{C}_i^{H}\left(\mv{U}^{*}_1\right)^{H} \nonumber \\
&+ \mv{U}^{*}_2\mv{D}_i\left(\mv{U}^{*}_2\right)^{H} , \forall i \in \mathcal{K_I} \label{eq:information}\\
\mv{S}_E = \mv{U}^{*}_2\mv{E}\left(\mv{U}^{*}_2\right)^{H}\label{eq:energy final}
\end{align}
where $\mv{B}^{*}_i$ is obtained by solving (\ref{eq:full rank}) with $\{\lambda_j^*\}$. Therefore, it remains to find a feasible and optimal set of $\{\mv{C}_i\}$, $\{\mv{D}_i\}$ and $\mv{E}$ such that the minimum harvested power constraints in (P1) are all satisfied. Since $r^{*}_i$ does not depend on the choice of $\{\mv{C}_i\}$, $\{\mv{D}_i\}$ and $\mv{E}$, $\forall i \in \mathcal{K_I}$, finding the primal optimal solution corresponds to solving a feasibility problem only involving the constraints in (\ref{eq:P1 2}) and (\ref{eq:P1 3}). Note that in general, there can be more than one feasible solutions to such a feasibility problem. Among them, we are interested in the solution with low-rank information covariance matrices in order to minimize the decoding complexity at the ID receiver. Therefore, we propose to minimize the sum of the ranks of all information covariance matrices, i.e., $\sum_{i\in \mathcal{K_I}} \text{rank}(\mv{S}_i)$. However, the rank function is not convex. By applying the convex approximation of the rank function \cite{rank} and using the fact that the nuclear norm of a covariance matrix equals to its trace, we solve the following problem to find a desired optimal solution.
\begin{align}
\mathrm{(P4)}:&\mathop{\mathtt{Min.}}\limits_{\left\{\mv{C}_i\right\}, \left\{\mv{D}_i\right\}, \mv{E}}
~~ \sum_{i \in \mathcal{K_I}}\text{Tr}\left(\mv{S}_i\right) \\
\mathtt{s.t.}
& ~~ \text{Tr}\left[\left(\sum_{i \in \mathcal{K_I}}\mv{S}_i+\mv{S}_E\right)\mv{G}_j\right] \geq E_j, \forall j \in \mathcal{K_E} \label{eq:p4 1}\\
& ~~ \text{Tr}\left(\sum_{i \in \mathcal{K_I}}\mv{S}_i+\mv{S}_E\right) \leq P_{\text{sum}} \label{eq:p4 2}\\
& ~~ \mv{S}_E \succeq \mv{0}, \mv{S}_i \succeq \mv{0}, \forall i \in \mathcal{K_I} \label{eq:p4 3}
\end{align}
where $\mv{S}_i$ and $\mv{S}_E$ are given in (\ref{eq:information}) and (\ref{eq:energy final}), respectively. As a result, the primal optimal solution to (P1) is finally obtained. Note that in the case that the obtained solution of (P2) with (\ref{eq:solution for dual}) is not feasible to (P1), the information covariance matrices may need to be expanded according to (\ref{eq:information}) if adding dedicated energy signal still cannot satisfy the energy requirements of all EH receivers. By combining the procedures in Sections \ref{sec:BC MAC}, \ref{sec:2} and \ref{sec:3},  the overall algorithm for solving problem (P1) is summarized in Table \ref{table1}.

For the algorithm given in Table \ref{table1}, the computation time is dominated by the ellipsoid method in steps 1)-3) and the SDP in step 4). In particular, the time complexity of step a) is of order $m^3K^2_I + m^2K^3_I$ by standard interior point method \cite{Boyd2}, where $m$ is the rank of matrix $\mv{A}$. Therefore, the worst case complexity is of order $N^3K^2_I + N^2K^3_I$. For step b), the complexity for computing the sub-gradients of $g(\{\lambda_j\})$ and the constraints in (\ref{eq:p3 1}), (\ref{eq:p3 2}) and (\ref{eq:p3 3}) is of order $N^2K_E$, and that for updating $\{\lambda_j\}$ is of order $K_E^2$. Thus, the time complexity of steps a)-b) is $\mathcal{O}(N^3K^2_I+N^2K^3_I+N^2K_E+K^2_E)$ in total. Note that step
2) iterates $\mathcal{O}(K^2_E)$ times to converge \cite{Boyd2}, thus the total time complexity of steps 1)-3) is $\mathcal{O}\left(K^2_E(N^3K^2_I+N^2K^3_I+N^2K_E+K^2_E)\right)$. The time complexity of solving SDP in step 4) is $\mathcal{O}(K^3_IN^{3.5}+K^4_I)$ \cite{complexity}. Thus, the overall complexity is $\mathcal{O}\left(K^2_E(N^3K^2_I+N^2K^3_I+N^2K_E+K^2_E) + K^3_IN^{3.5}+K^4_I\right)$ at most for the algorithm in Table \ref{table1}.

\begin{table}[ht]
\begin{center}
\caption{\textbf{Algorithm 1}: Algorithm for Solving Problem (P1)} \vspace{0.2cm}
 \hrule
\vspace{0.3cm}
\begin{enumerate}
\item Initialize $\lambda_j \geq 0, \forall j \in \mathcal{K_E}$.
\item {\bf Repeat:}
    \begin{itemize}
    \item[ a)] Obtain $\{\mv{\bar{S}}_i\}$ by solving problem (\ref{eq:full rank}) with given $\{\lambda_j\}$;
    \item[ b)] Compute the sub-gradients of $g(\{\lambda_j\})$ and the constraints in (\ref{eq:p3 1}), (\ref{eq:p3 2}) and (\ref{eq:p3 3}), and update $\{\lambda_j\}$ accordingly using the ellipsoid method \cite{Boydbook}.
    \end{itemize}
\item {\bf Until }$\{\lambda_j\}$ converges within a prescribed accuracy.
\item Set $\lambda_j^* = \lambda_j, \forall j\in \mathcal{K_E}$. If $\mv{A}^{*}$ is not of full rank and the obtained solution by (\ref{eq:solution for dual}) is not feasible to (P1), then find the optimal covariance matrices for information and energy transfer by solving problem (P4).
\end{enumerate}
\vspace{0.2cm} \hrule \label{table1} \end{center}
\end{table}

It is worth pointing out that in general there exist three cases for the optimal solution of (P1) obtained by the algorithm in Table \ref{table1}. For convenience, we denote $\mv{S}^{*}_I = \sum_{i \in \mathcal{K_I}}\mv{S}^{*}_i$.
\begin{enumerate}
\item $\mv{S}^{*}_I = \mv{0}$ and $\mv{S}^{*}_E \succeq \mv{0}$: in this case, no information can be transferred without violating the minimum harvested power constraints. This situation only occurs when the channel of each ID receiver is orthogonal to that of any EH receiver (i.e., $\mv{h}_i^H\mv{g}_j = 0, \forall i\in \mathcal{K_I}, j\in \mathcal{K_E}$), and full transmit power is used for ensuring the harvested power constraints. Note that under practical setup with randomly generated wireless channels, this case does not occur.
\item $\mv{S}^{*}_I \succeq \mv{0}$ and $\mv{S}^{*}_E = \mv{0}$: in this case, no dedicated energy signal is required. This corresponds to that the energy harvested from the information signals at each EH receiver is sufficient to satisfy the harvested power constraints. One situation for this case to occur is that if $\mv{H}$ is of full rank, then $\mv{A}^{*}$ is also full rank such that the unique optimal solution to problem (\ref{eq:combined problem}) (and thus optimal to (P1)) is $\mv{S}^{*}_i = \mv{U}^{*}_1\mv{B}^{*}_i\left(\mv{U}^{*}_1\right)^{H}, \forall i \in \mathcal{K_I}$, and $\mv{S}^{*}_E = \mathbf{0}$ from Remark \ref{remark:1}.
\item $\mv{S}^{*}_I \succeq \mv{0}$ and $\mv{S}^{*}_E \succeq \mv{0}$: in this case, dedicated energy signal is required to guarantee the harvested power constraints while maximizing the WSR. Interestingly, given the strong duality between (P1) and (P2) as well as Lemma \ref{lemma:2}, the optimal dedicated energy signal is orthogonal to the MISO channels of all the ID receivers. \textbf{Therefore, the extra processing of pre-canceling the interference caused by energy signals at the AP (via DPC) or at each ID receiver is not needed}.
\end{enumerate}
Note that the obtained optimal information and energy covariance matrices can have a rank larger than unity in general. However, our extensive simulation trials show that Algorithm 1 always returns rank-one information covariance matrices thanks to the approximated rank minimization employed in (P4). Nevertheless, it is difficult for us to guarantee the existence of optimal rank-one information covariance matrices in general.

\begin{remark}\label{remark:3}
To further provide insights on the transmit covariance matrices expansion in (P4), we present an intuitive explanation on how the obtained $\{\mv{C}_i^*\}$ and $\{\mv{D}_i^*\}$ can help ensure the harvested power constraints at all the EH receivers. First, denote the middle two terms in (\ref{eq:information}) involving $\{\mv{C}_i\}$ as $\mv{O}_i$, i.e., $\mv{O}_i = \mv{U}^{*}_1\mv{C}_i\left(\mv{U}^{*}_2\right)^{H} + \mv{U}^{*}_2\mv{C}_i^{H}\left(\mv{U}^{*}_1\right)^{H}, \forall i \in \mathcal{K_I}$. Since the columns of $\mv{U}^{*}_1$ and $\mv{U}^{*}_2$ form the orthogonal basis for the range and null space of $\mv{A}^{*}$, $\mv{O}_i$ has the following two properties:
\begin{align}
\text{Tr}\left(\mv{O}_i\right) & = 0, \forall i \in \mathcal{K_I} \label{eq:pro1}\\
\text{Tr}\left(\mv{O}_i\mv{A}^{*}\right) & = 0, \forall i \in \mathcal{K_I}. \label{eq:pro2}
\end{align}
Based on (\ref{eq:pro1}) and (\ref{eq:pro2}) with some manipulations, it follows that
\begin{align}
\sum^{K_E}_{j=1}\lambda^{*}_j\text{Tr}\left(\mv{O}_i\mv{G}_j\right) = 0, \forall i \in \mathcal{K_I}. \label{eq:pro3}
\end{align}
From (\ref{eq:pro1}), it is observed that $\mv{O}_i$'s do not cost any transmission power. Furthermore, according to (\ref{eq:pro2}) and the second point of Lemma \ref{lemma:1}, $\mv{O}_i$'s do not affect the data rate of ID receivers, i.e., $r_i, \forall i \in \mathcal{K_I}$. However, based on (\ref{eq:pro3}), it is observed that $\mv{O}_i$'s serve the purpose of re-allocating the power harvested at each EH receiver from the information embedded signals without affecting the data rate of each ID receiver. This reallocation is necessary if there exists $j \in \mathcal{K_E}$ such that $\text{Tr}\left[\left(\sum_{i \in \mathcal{K_I}}\mv{U}^{*}_1\mv{B}^{*}_i\left(\mv{U}^{*}_1\right)^{H}+\mv{S}^{*}_E\right)\mv{G}_j\right] < E_j$. Finally, from the theory of Schur complement \cite{Boydbook}, it is known that $\mv{S}^{*}_i \succeq \mv{0}$ if and only if (iff) the following conditions are satisfied:
\begin{align}
\mv{B}^{*}_i &\succeq \mv{0} \\
\left(\mv{I} - \mv{B}_i^{*}(\mv{B}_i^{*})^{\dag}\right)\mv{C}_i &= \mv{0} \\
\mv{D}_i - \mv{C}^{H}_i(\mv{B}_i^{*})^{\dag}\mv{C}_i &\succeq \mv{0}.
\end{align}
Therefore, $\mv{D}_i$ may be required to ensure that $\mv{S}^{*}_i \succeq \mv{0}$.
\end{remark}

\section{Separate Information and Energy Signal Design}\label{sec:sub}
So far, we have optimally solved problem (P1) by jointly designing the information and energy signals, which however requires significant computational complexity due to the iterative implementation based on the ellipsoid method. To reduce the complexity, in this section, we propose two suboptimal algorithms with separate information and energy signal design, namely ID/EH oriented separate information and energy signal design (IDSIED/EHSIED). Note that since separate information and energy signal design is assumed in both suboptimal algorithms, unlike the optimal solution, the energy signal is in general not orthogonal to the channels of all the ID receivers; as a result, the extra processing of pre-canceling the interference caused by energy signals at the AP (via DPC) or at each ID receiver is needed.

\subsection{ID Oriented Separate Information and Energy Signal Design}\label{sec:sub1}
In this algorithm, the total transmit power $P_{\text{sum}}$ is divided into two parts: $P_I$ and $P_{\text{sum}} - P_I$ ($0\leq P_I\leq P_\text{sum}$), which are exclusively allocated to information and energy signals, respectively. With any given power allocation, i.e., $P_I$, the information covariance matrices $\{\mv{S}_i\}$ are first designed to maximize the WSR for all ID receivers, by solving the following optimization problem:
\begin{align}
\mathop{\mathtt{Max.}}\limits_{\left\{\mv{S}_i\right\},\mv{r}} &
~~ \sum^{K_I}_{i=1}\alpha_ir_i  \nonumber \\
\mathtt{s.t.}
& ~~ \{r_i\} \in \mathcal{C}_{\text{BC}}\left(\{\mv{h}_i\},\{\mv{S}_i\}\right) \nonumber \\
& ~~ \text{Tr}\left(\sum_{i \in \mathcal{K_I}}\mv{S}_i\right) \leq P_{I} \nonumber \\
& ~~ \mv{S}_i \succeq \mv{0}, \forall i \in \mathcal{K_I}. \label{eq:ps information}
\end{align}
Note that problem (\ref{eq:ps information}) is a WSRMax problem in MISO-BC under a single MaxLTCC, which can be solved by the general BC-MAC duality with a guaranteed rank-one solution.

Next, let $\left\{\mv{S}^{'}_i(P_{I})\right\}$ be the optimal solution of (\ref{eq:ps information}) given $P_{I} \geq 0$. The energy covariance matrix $\mv{S}_E$ is then optimized to ensure that the harvested power constraint of each EH receiver is satisfied, as follows:
\begin{align}
\mathop{\mathtt{find}} &
~~ \mv{S}_E  \nonumber \\
\mathtt{s.t.}
& ~~ \text{Tr}\left[\left(\sum_{i \in \mathcal{K_I}}\mv{S}^{'}_i(P_{I})+\mv{S}_E\right)\mv{G}_j\right] \geq E_j, \forall j \in \mathcal{K_E} \nonumber \\
& ~~ \text{Tr}\left(\mv{S}_E\right) \leq P_{\text{sum}} - P_{I} \nonumber \\
& ~~ \mv{S}_E \succeq \mv{0}. \label{eq:ps energy}
\end{align}
Since problem (\ref{eq:ps energy}) is a standard SDP, it thus can be solved by standard convex optimization techniques, e.g. the ellipsoid method \cite{Boyd2}.

Note that problem (\ref{eq:ps energy}) can be infeasible, which means that the corresponding power allocation is not admissible. In order to find a feasible optimal power allocation between the information and energy signals, under which the WSR of all ID receivers is maximized and the harvested power constraints of all EH receivers are satisfied, we further employ bisection method to update $P_I$, as summarized in Table \ref{table3}. The convergence of this algorithm is guaranteed if problem (P1) is feasible. It is because problem (\ref{eq:ps energy}) is equivalent to the feasibility problem (\ref{eq:feasibility}) with $P_{I} = 0$.

\begin{table}[ht]
\begin{center}
\caption{\textbf{Algorithm 2}: ID Oriented separate information and energy signal design} \vspace{0.2cm}
 \hrule
\vspace{0.3cm}
\begin{enumerate}
\item {\bf Given} $P_{\text{min}}(\triangleq 0) \leq P^{*}_I < P_{\text{max}} (\triangleq P_{\text{sum}}$).
\item {\bf Repeat}
    \begin{itemize}
        \item[ a)] $P_I = \frac{1}{2}\left(P_{\text{min}} + P_{\text{max}}\right)$.
        \item[ b)] Obtain $\left\{\mv{S}^{'}_i(P_{I})\right\}$ by solving problem (\ref{eq:ps information}).
        \item[ c)] Solve problem (\ref{eq:ps energy}).
        \item[ d)] If problem (\ref{eq:ps energy}) is feasible given $P_I$, set $P_{\text{min}} \leftarrow P_I$; otherwise, set $P_{\text{max}} \leftarrow P_I$.
    \end{itemize}
\item {\bf Until} $|P_{\text{max}} - P_{\text{min}}| < \delta$ where $\delta$ is a small positive constant that controls the algorithm accuracy.
\end{enumerate}
\vspace{0.2cm} \hrule \label{table3} \end{center}
\end{table}

\subsection{EH Oriented Separate Information and Energy Signal Design}\label{sec:sub2}
In this algorithm, the energy signals are first designed to meet all the harvested power requirements at EH receivers while using the minimum transmit power, by solving the following problem:
\begin{align}
 \mathop{\mathtt{Min.}}\limits_{\mv{S}_E} &
~~ \text{Tr}\left(\mv{S}_E\right) \nonumber \\
\mathtt{s.t.}
& ~~ \text{Tr}\left[\mv{S}_E\mv{G}_j\right] \geq E_j, \forall j \in \mathcal{K_E} \nonumber \\
& ~~ \mv{S}_E \succeq \mv{0}. \label{eq:eho energy}
\end{align}
Note that (\ref{eq:eho energy}) is again a standard SDP, which thus can be solved by standard convex optimization techniques, e.g. the ellipsoid method \cite{Boyd2}.

Let $\mv{S}^{'}_E$ be the optimal solution of problem (\ref{eq:eho energy}). Then, the remaining power is allocated to information signals to maximize the WSR, i.e.,
\begin{align}
\mathop{\mathtt{Max.}}\limits_{\left\{\mv{S}_i\right\},\mv{r}} &
~~ \sum^{K_I}_{i=1}\alpha_ir_i  \nonumber \\
\mathtt{s.t.}
& ~~ \mv{r} \in \mathcal{C}_{\text{BC}}\left(\{\mv{h}_i\},\{\mv{S}_i\}\right) \nonumber \\
& ~~ \text{Tr}\left(\sum_{i \in \mathcal{K_I}}\mv{S}_i\right) \leq P_{\text{sum}} - \text{Tr}\left(\mv{S}^{'}_E\right) \nonumber \\
& ~~ \mv{S}_i \succeq \mv{0}, \forall i \in \mathcal{K_I}. \label{eq:eho information}
\end{align}
Similarly to (\ref{eq:ps information}), problem (\ref{eq:eho information}) is a WSRMax problem in MISO-BC under a single MaxLTCC, which thus can be solved. Compared with IDSIED in Section \ref{sec:sub1}, EHSIED has even lower complexity with no iterative updating of power allocation required. However, the performance of EHSIED is in general worse than IDSIED in terms of WSRMax. This is because the contribution of information signals to the EH receivers is not considered in EHSIED, such that the transmit power for energy signals is in general over-allocated.

\begin{proposition}\label{proposition:1}
If the channel of each ID receiver is orthogonal to that of any EH receiver, i.e., $\mv{h}^{H}_i\mv{g}_j = 0, \forall i \in \mathcal{K_I},\forall j \in \mathcal{K_E}$, both IDSIED and EHSIED have the same performance as Algorithm 1, i.e., IDSIED and EHSIED are both optimal.
\end{proposition}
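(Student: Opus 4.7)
The plan is to show that under the orthogonality hypothesis $\mv{h}_i^H\mv{g}_j = 0$ for all $i,j$, the optimal solution of (P1) admits a canonical form in which the WSR-relevant and EH-relevant parts of the transmit covariance live in mutually orthogonal subspaces, and that both IDSIED and EHSIED automatically realize this canonical form. Let $\mathcal{H}$ denote the span of $\{\mv{h}_i\}_{i\in\mathcal{K_I}}$ and $\mathcal{G}$ the span of $\{\mv{g}_j\}_{j\in\mathcal{K_E}}$, with corresponding orthogonal projections $\mv{P}_{\mathcal{H}}$ and $\mv{P}_{\mathcal{G}}$; the hypothesis is equivalent to $\mathcal{H}\perp\mathcal{G}$, i.e., $\mv{P}_{\mathcal{H}}\mv{P}_{\mathcal{G}} = \mv{0}$.

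The first step is a structural lemma for (P1): any optimal $(\{\mv{S}_i^*\},\mv{S}_E^*)$ may be replaced, without loss of WSR or feasibility, by $(\{\mv{P}_{\mathcal{H}}\mv{S}_i^*\mv{P}_{\mathcal{H}}\},\,\mv{S}_E^* + \sum_{i\in\mathcal{K_I}}\mv{P}_{\mathcal{G}}\mv{S}_i^*\mv{P}_{\mathcal{G}})$. Indeed, $\mv{P}_{\mathcal{H}}\mv{S}_i^*\mv{P}_{\mathcal{H}}$ is still positive semidefinite; since $\mv{h}_i\in\mathcal{H}$, the rate of ID receiver $i$ is unchanged; since $\mv{P}_{\mathcal{H}}\mv{g}_j = \mv{0}$, the new information signals contribute nothing to the harvested power; and the transferred $\mv{P}_{\mathcal{G}}\mv{S}_i^*\mv{P}_{\mathcal{G}}$ terms restore exactly the quadratic forms $\mv{g}_j^H\mv{S}_i^*\mv{g}_j$ using no more power than $\text{Tr}(\mv{S}_i^*) - \text{Tr}(\mv{P}_{\mathcal{H}}\mv{S}_i^*\mv{P}_{\mathcal{H}})$. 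Thus the EH constraints are met entirely by the new $\mv{S}_E^*$, and a second reduction shows that this $\mv{S}_E^*$ itself must solve the minimum-power EH feasibility problem (otherwise slack could be reallocated to the information signals to strictly increase the WSR). Writing $P_E^*$ for the optimal value of (\ref{eq:eho energy}), this gives $\text{Tr}(\mv{S}_E^*) = P_E^*$, and the WSR produced by Algorithm 1 is therefore the value of a pure MISO-BC WSRMax problem (a single MaxLTCC) with sum-power budget $P_{\text{sum}} - P_E^*$.

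For EHSIED this identification is immediate: (\ref{eq:eho energy}) returns an energy signal of power $P_E^*$, and (\ref{eq:eho information}) solves precisely the same WSRMax problem as above. For IDSIED, for any admissible $P_I$ the optimizer $\{\mv{S}_i'(P_I)\}$ of (\ref{eq:ps information}) may be taken with support in $\mathcal{H}$ (any $\mathcal{H}^\perp$ component is pure waste in a pure WSRMax problem), so $\text{Tr}(\mv{S}_i'(P_I)\mv{G}_j) = 0$ and the inner feasibility problem (\ref{eq:ps energy}) collapses to whether the energy signal alone can meet the EH constraints using power $P_{\text{sum}} - P_I$, i.e., to $P_{\text{sum}} - P_I \geq P_E^*$. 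Since the MISO-BC WSRMax value is non-decreasing in $P_I$, the bisection in Algorithm 2 converges to $P_I^* = P_{\text{sum}} - P_E^*$, matching Algorithm 1.

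The main obstacle is the structural lemma above: all four required properties (PSD preservation, rate invariance, exact restoration of each harvested power $\mv{g}_j^H\mv{S}_i^*\mv{g}_j$, and no net increase in total transmit power) rest on the orthogonality $\mathcal{H}\perp\mathcal{G}$, and one must be careful that summing the transferred $\mv{P}_{\mathcal{G}}\mv{S}_i^*\mv{P}_{\mathcal{G}}$ contributions over $i$ reproduces the original harvested powers exactly; once this lemma is in place the matching of EHSIED and IDSIED to Algorithm 1 follows by direct inspection.
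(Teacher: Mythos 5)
Your proof is correct and follows essentially the same route as the paper's (which is only a brief sketch): decompose the problem along the orthogonal subspaces spanned by the ID and EH channels, observe that the two subproblems interact only through the power split, and conclude that allocating the minimum power $P_E^*$ to energy transfer is optimal, which both EHSIED and the converged IDSIED bisection realize. Your explicit projection-based structural lemma simply supplies the details the paper omits "for brevity."
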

\begin{proof}
Proposition \ref{proposition:1} can be proved by identifying the fact that with the channels of all the ID receivers being orthogonal to those of all the EH receivers, problem (P1) can be effectively decomposed into two subproblems: one for information and the other for the energy transmission design, which interact through power allocation only. Given the objective of WSRMax, it is not difficult to see that allocating the minimum power to energy transfer is optimal, i.e., EHSIED, which means the iterative power updating or IDSIED is not necessary. The details are omitted for brevity.
\end{proof}

\section{Numerical Results}\label{sec:numerical}
In this section, numerical examples are provided to validate our results. It is assumed that the signal attenuation from the AP to all EH receivers is $30$ dB corresponding to an equal distance of 5 meter, and that to all ID receivers is 70 dB at an equal distance of 20 meters. For the purpose of exposition, we define the (channel) correlation between ID receiver $i$ and EH receiver $j$ as $\rho_{i,j} = \frac{\left|\mv{h}^{H}_i\mv{g}_j\right|}{\|\mv{h}_i\|\|\mv{g}_j\|}, \forall i \in \mathcal{K_I},\forall j \in \mathcal{K_E}$. Let the correlation matrix $\boldsymbol\rho$ be the collection of all correlation coefficients with $[\boldsymbol\rho]_{i,j} = \rho_{i,j}, \forall i \in \mathcal{K_I},\forall j \in \mathcal{K_E}$. We also set the harvested power constraints of all the EH receivers identical for simplicity, i.e., $E_j = E, \forall j \in \mathcal{K_E}$. For convenience, we further denote $E_{\text{max}}$ as the maximum allowable value of $E$ for (P1) to be feasible. Note that the value of $E_{\text{max}}$ depends on the exact channel realization of all EH receivers, which can be obtained by solving the following SDP:
\begin{align}
 \mathop{\mathtt{Max.}}\limits_{E_{\text{max}},\mv{S}_E} &
~~ E_{\text{max}} \nonumber \\
\mathtt{s.t.}
& ~~ \text{Tr}\left[\mv{S}_E\mv{G}_j\right] \geq E_{\text{max}}, \forall j \in \mathcal{K_E} \nonumber \\
& ~~ \text{Tr}\left(\mv{S}_E\right) \leq P_{\text{sum}},~ \mv{S}_E \succeq \mv{0}.
\end{align}
Finally, we set $P_{\text{sum}} = 5$ Watt(W) and $\sigma^2 = -50$ dBm.

\subsection{Illustration of Optimal Information and Energy Signals}
In this subsection, we provide one numerical example to demonstrate the necessity of expanding the obtained solution to (P2) according to (\ref{eq:information}), for the case that the optimal solution of problem (P2) in (\ref{eq:solution for dual}) is not feasible to (P1). It is assumed that $N = 10$, $K_I = 1$, $K_E = 10$, $E =0.9E_{\text{max}}$, and the correlations between ID and EH receivers are distributed as $\rho_{1,j} = 2^{-j}, \forall j \in \mathcal{K_E}$, for the purpose of demonstration. The results are summarized in Table \ref{table4} (in mW), in which the harvested power of each EH receiver with the optimal solution to (P2), the optimal information signal to (P1) after expansion, and both the optimal information and energy signals to (P1) are listed in the second, third and fourth column, respectively.

By comparing the second and last columns of Table \ref{table4}, it is first observed that the obtained solution to (P2) results in imbalanced harvested power distribution among EH receivers, and in particular does not meet the minimum harvested power constraints for EH receivers $9$ and $10$. For the imbalanced distribution, it is also interesting to observe that for any two EH receivers $i, j \in \mathcal{K_E}$, $\rho_{1,i} > \rho_{1,j}$, i.e., the channel of EH receiver $i$ has higher spatial correlation with that of ID receiver, does not mean that EH receiver $i$ can harvest more power from the information embedded signal. This is because the information transfer needs to be compromised for energy transfer and shift its transmission direction away from that of rate maximization. According to the third column of Table \ref{table4}, the harvested power levels of different EH receivers are re-allocated to achieve a more balanced distribution after expanding the information signal based on (\ref{eq:information}), which confirms the results in Remark \ref{remark:3}. Finally, since $\rho_{1,10} = 2^{-10} \approx 0$, i.e, EH receiver $10$ is almost orthogonal to the ID receiver, extra dedicated energy signal is necessary to satisfy its harvested power requirement.

\begin{table}[ht]
\caption{Results on Finding primal feasible solution for problem (P1)}
\begin{center}
\begin{tabular}{|c|c|c|c|c|c|c|}
  \hline
  EH & $\text{Tr}\left[\mv{U}^{*}_1\mv{B}^{*}_1\left(\mv{U}^{*}_1\right)^{H}\mv{G}_j\right]$ & $\text{Tr}\left[\mv{S}^{*}_1\mv{G}_j\right]$ & $\text{Tr}\left[\left(\mv{S}^{*}_1 + \mv{S}^{*}_E\right)\mv{G}_j\right]$ & $E_j$\\
  receiver $j$ & (mW) & (mW) & (mW) & (mW) \\ \hline
  $1$ &$0.5035$ & $0.4995$  & $0.4995$ & $0.4995$  \\ \hline
  $2$ & $0.4995$ & $0.4995$  & $0.4995$ & $0.4995$ \\ \hline
  $3$ & $0.5013$ & $0.4995$  & $0.4995$ & $0.4995$ \\ \hline
  $4$ & $0.5005$ & $0.4995$  & $0.4995$ & $0.4995$ \\ \hline
  $5$ & $0.5079$ & $0.4995$  & $0.4995$ & $0.4995$  \\ \hline
  $6$ & $0.5000$ & $0.4995$  & $0.4995$ & $0.4995$ \\ \hline
  $7$ & $0.4996$ & $0.4995$  & $0.4995$ & $0.4995$  \\ \hline
  $8$ & $0.5057$ & $0.4995$  & $0.4995$ & $0.4995$ \\ \hline
  $9$ & $0.4881$ & $0.4995$  & $0.4995$ & $0.4995$  \\ \hline
  $10$ & $0.4603$ & $0.4693$  & $0.4995$ & $0.4995$  \\
  \hline
\end{tabular}
\end{center}\label{table4}
\end{table}

\subsection{Capacity Region Comparison}
In this subsection, we illustrate the capacity regions with and without harvested power constraints for the case of $N = 5$, $K_I = 2$ and $K_E = 3$ in Fig. \ref{fig:capacity5} and Fig. \ref{fig:capacity9}. The achievable rate regions obtained by the two benchmark algorithms, i.e., IDSIED and EHSIED, are also presented for comparison. The harvested power requirement is set to be $E = 0.5E_{\text{max}}$ for Fig. \ref{fig:capacity5first} and Fig. \ref{fig:capacity9first}, and $E = 0.9E_{\text{max}}$ for Fig. \ref{fig:capacity5second} and Fig. \ref{fig:capacity9second}. For the correlations between EH and ID receivers, we consider the following two configurations:
\begin{enumerate}
\item Highly correlated setup (HCS): for Fig. \ref{fig:capacity5}, each ID receiver is assumed to be highly correlated with all EH receivers as
\begin{align}
\boldsymbol\rho^{\text{HCS}} = \begin{bmatrix}
  1/2 & 1/4 & 1/8 \\
  1 & 1/2 & 1/4 \\
\end{bmatrix}
\end{align}
\item Less correlated setup (LCS): for Fig. \ref{fig:capacity9}, we consider a less correlated setup. In particular, it is assumed that ID receiver $1$ is orthogonal to EH receiver $2$ and ID receiver $2$ is orthogonal to all EH receivers. Thus, the correlation matrix is given as
\begin{align}
\boldsymbol\rho^{\text{LCS}} = \begin{bmatrix}
  1/2 & 0 & 1/8 \\
  0 & 0 & 0 \\
\end{bmatrix}
\end{align}
\end{enumerate}

From Fig. \ref{fig:capacity5}, it is first observed that the capacity loss with harvested power constraints for EH receivers is not significant under HCS for both the cases of $E = 0.5E_{\text{max}}$ and $0.9E_{\text{max}}$. This observation can be explained as follows: with the channels of ID receivers being highly correlated to those of EH receivers, each EH receiver can harvest significant amount of power from the information signals intending for ID receivers. As a result, the harvested power requirements are more easily satisfied while maximizing the transmission rate of ID receivers. Moreover, it is observed that EHSIED performs much worse than IDSIED and the optimal algorithm. It is because that EHSIED ignores the fact that information signals can also contribute to EH due to their broadcast property, such that only a small porion of power is allocated for information transfer. Finally, as $E$ increases, the performance gap between IDSIED and the optimal algorithm increases due to the separation of information and energy signal design.

\begin{figure}[ht]
\centering
\subfigure[]{
\includegraphics[width=0.48\textwidth]{./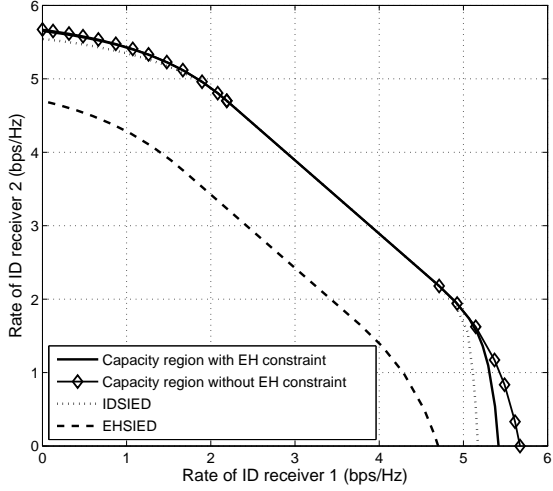}
\label{fig:capacity5first}
}
\subfigure[]{
\includegraphics[width=0.48\textwidth]{./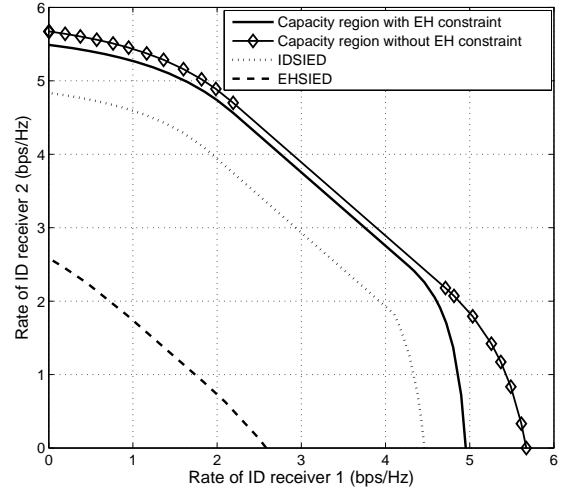}
\label{fig:capacity5second}
}
\caption[]{Capacity region under HCS: (a) $E = 0.5E_{\text{max}}$; (b) $E = 0.9E_{\text{max}}$.}
\label{fig:capacity5}
\end{figure}

From Fig. \ref{fig:capacity9}, it is observed that under LCS the capacity loss due to harvested power constraints is much larger than that under HCS (cf. Fig. \ref{fig:capacity5}), which also increases dramatically as $E$ increases. This is because the information signals for ID receivers have limited contribution to the EH receivers. One interesting result shown in Fig. \ref{fig:capacity9} is that the performance gap between the optimal and two benchmark algorithms reduces as ID receiver $2$ being given higher priority, and converges to zero while maximizing the rate of ID receiver $2$ exclusively. Since the channel of ID receiver $2$ is orthogonal to all EH receivers, problem (P1) with $\alpha_1 = 0$ and $\alpha_2 > 0$ can be decomposed into two subproblems as explained in Proposition \ref{proposition:1}. Consequently, IDSIED and EHSIED have the same performance as the optimal algorithm.

\begin{figure}[ht]
\centering
\subfigure[]{
\includegraphics[width=0.48\textwidth]{./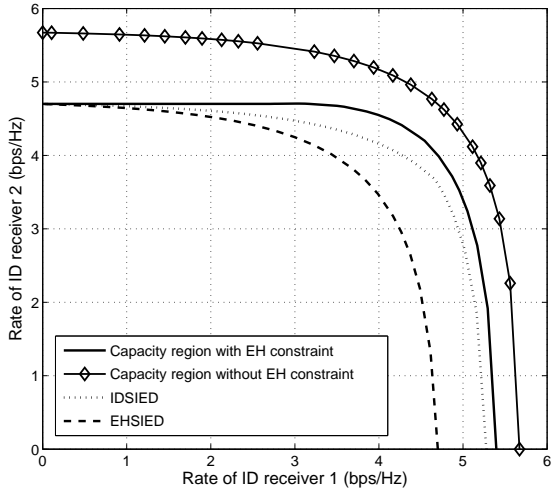}
\label{fig:capacity9first}
}
\subfigure[]{
\includegraphics[width=0.48\textwidth]{./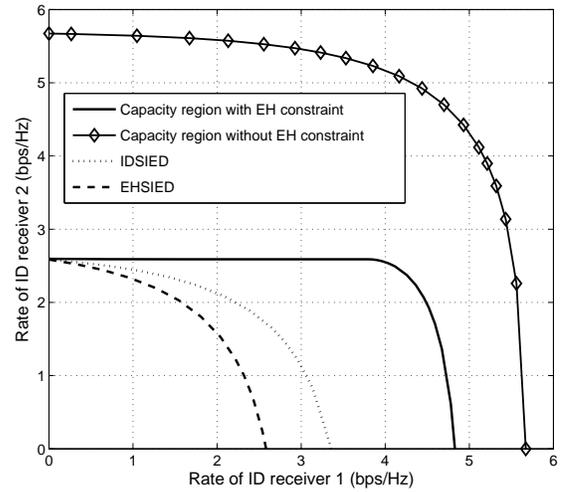}
\label{fig:capacity9second}
}
\caption[]{Capacity region under LCS: (a) $E = 0.5E_{\text{max}}$; (b) $E = 0.9E_{\text{max}}$.}
\label{fig:capacity9}
\end{figure}

\subsection{Sum-rate Comparison}
To further evaluate the performance of the optimal and two benchmark algorithms, Fig. \ref{fig:comparison1} compares their achieved sum-rate versus different EH constraint values of $E$, where the configurations for Fig. \ref{fig:sumratedemofirst} and Fig. \ref{fig:sumratedemosecond} are the same as those for Fig. \ref{fig:capacity5} and Fig. \ref{fig:capacity9}, respectively. It is first observed that the optimal algorithm outperforms both the two suboptimal algorithms, and the performance gap increases as $E$ increases. This observation further validate our theoretical results and the effectiveness of joint information and energy signals design. Note that all the three algorithms achieve the same sum-rate when $E = 0$, which is the maximum sum-rate achievable without harvested energy constraint. Second, we observe that the optimal algorithm and the IDSIED have similar performance when $E$ is small. This is because that when $E$ is sufficiently small, the information signals obtained by maximizing the sum-rate are sufficient to guarantee the harvested power constraints at each EH receiver. However, as $E$ increases, the information transfer needs to be compromised for energy transfer, such that the optimal directions of the information signals are shifted from those obtained by maximizing the sum-rate. Finally, by comparing IDSIED and EHSIED, it is observed that IDSIED outperforms EHSIED over the entire range of values of $E$. As $E$ increases, IDSIED diverges from EHSIED under HCS in Fig. \ref{fig:sumratedemofirst} but converges to EHSIED under LCS in Fig. \ref{fig:sumratedemosecond}. This is because under LCS, the information embedded signals can no longer make significant contribution to EH receivers, such that IDSIED has less noticeable advantage over EHSIED, especially when the harvested power constraints become stringent.

\begin{figure}[ht]
\centering
\subfigure[]{
\includegraphics[width=0.48\textwidth]{./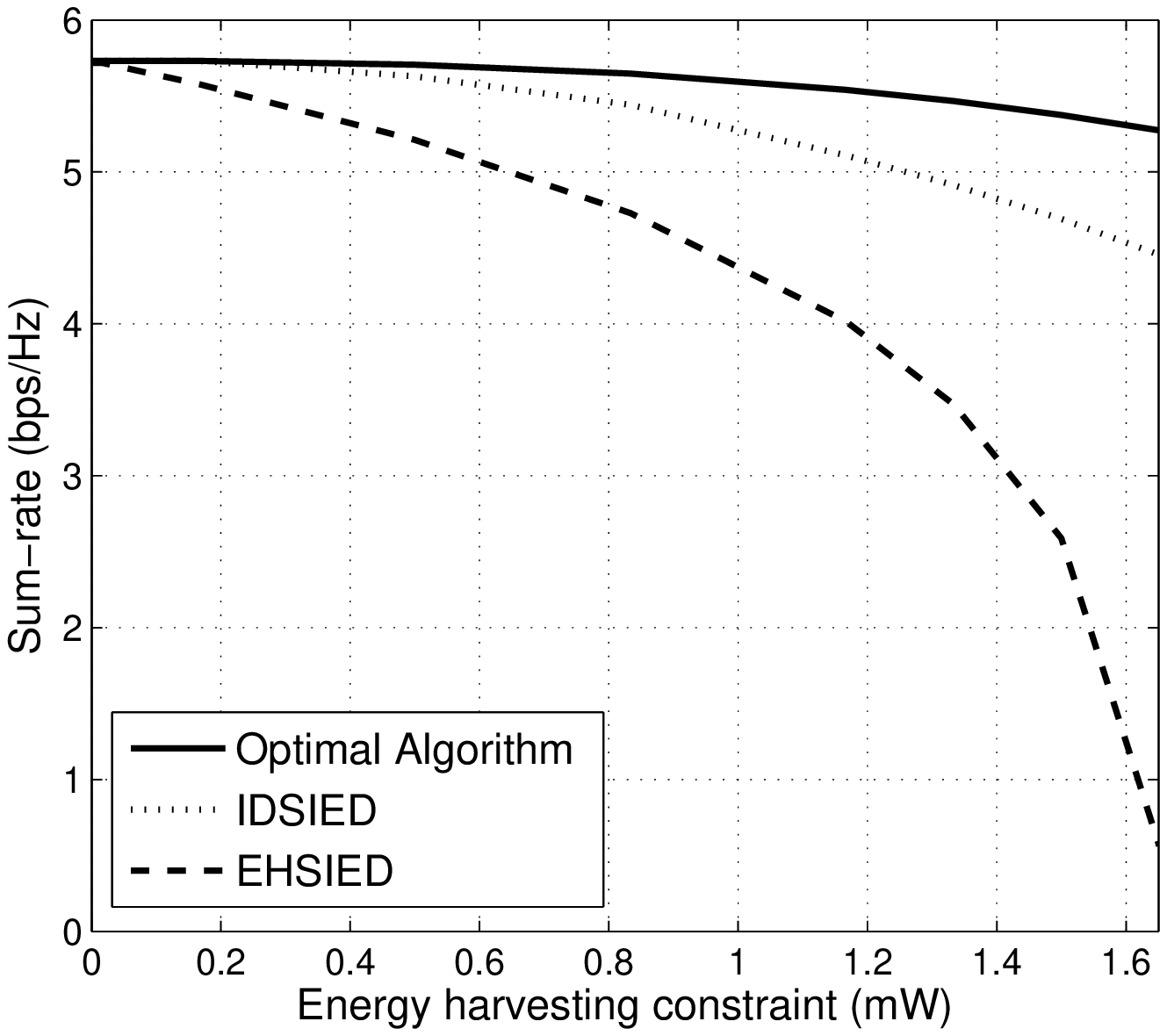}
\label{fig:sumratedemofirst}
}
\subfigure[]{
\includegraphics[width=0.48\textwidth]{./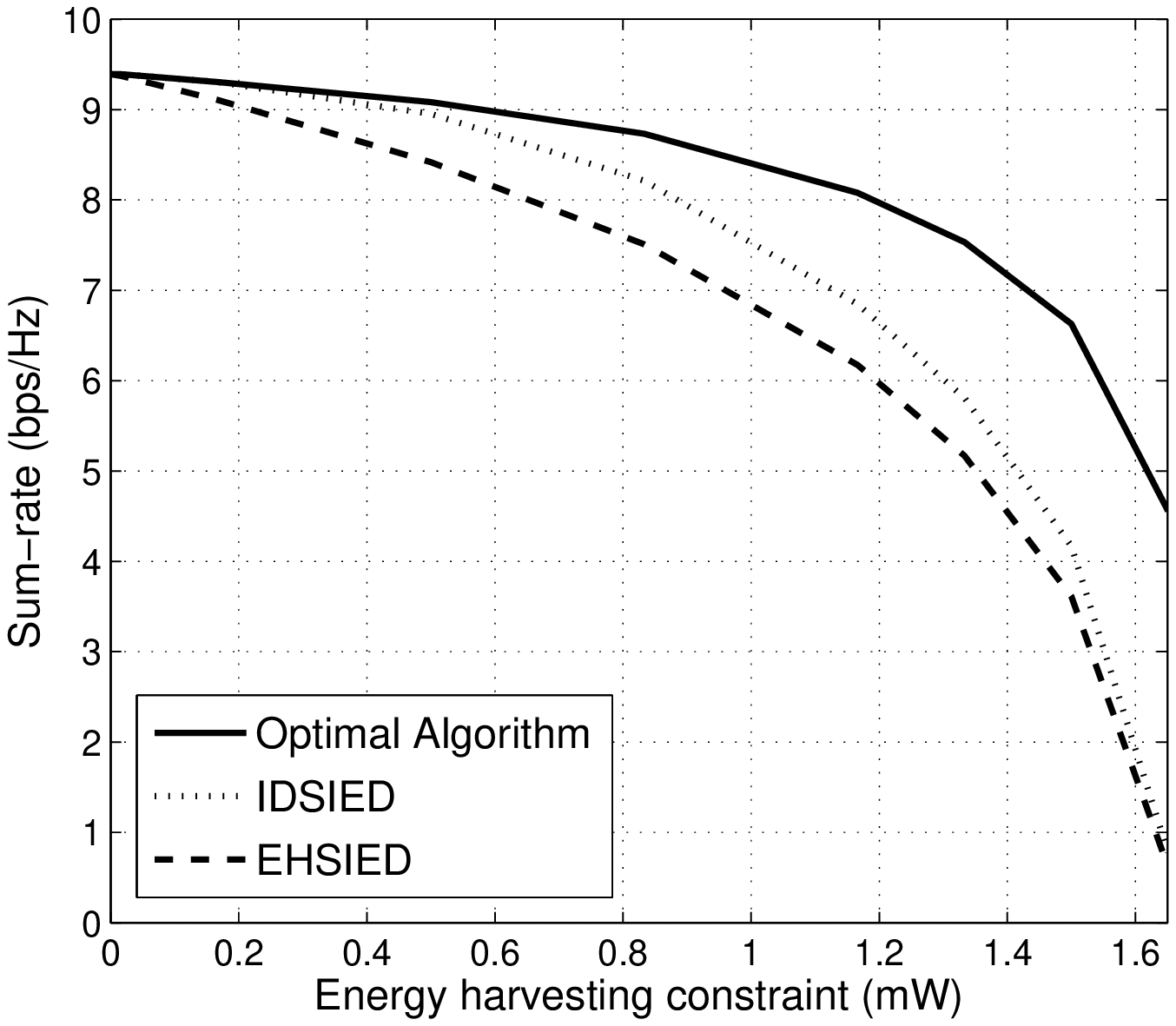}
\label{fig:sumratedemosecond}
}
\caption[]{Sum-rate performance comparison of optimal versus benchmark algorithms: (a) HCS; (b) LCS.}
\label{fig:comparison1}
\end{figure}

At last, in Fig. \ref{fig:avesumrate}, we illustrate the average sum-rate performance of the optimal and two benchmark algorithms versus different values of $E$ over $200$ randomly generated channels (various channel correlation between ID and EH receivers) for the case of $N = 5$, $K_I = 2$ and $K_E = 3$. The channel vector $\mv{h}_i$'s are generated from i.i.d. Rayleigh fading. However, due to the short transmission distance of EH receivers, for which the line-of-sight (LOS) signal is dominant, $\mv{g}_j$'s are generated based on the Rician fading model used in \cite{BF}. It is observed that, on average, the performance gap between the optimal and two benchmark algorithms increases as $E$ increases. However, the difference between IDSIED and EHSIED stays roughly the same from moderate to large values of $E$.

\begin{figure}
\centering
\epsfxsize=0.7\linewidth
\includegraphics[width=11cm]{./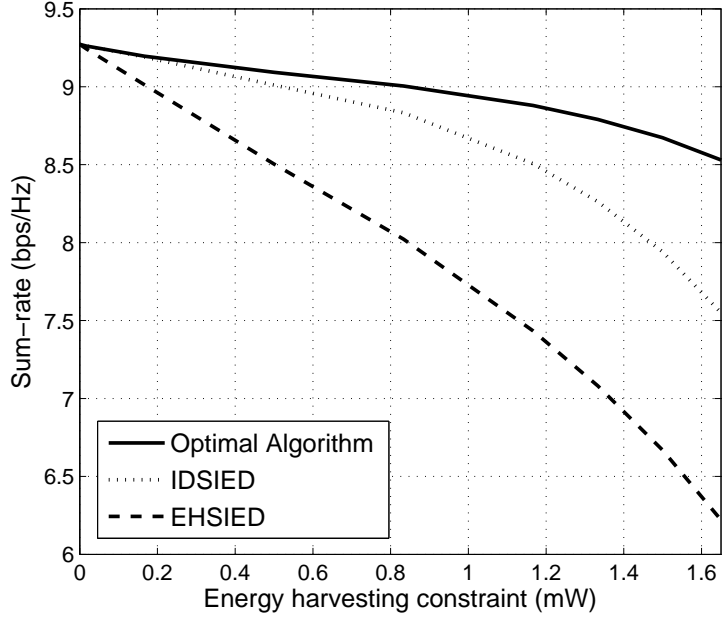}
\caption{Average sum-rate performance comparison of optimal versus benchmark algorithms.}
\label{fig:avesumrate}
\end{figure}

\section{Conclusion}\label{sec:conclusion}
In this paper, we study a MISO-BC for SWIPT, where a multi-antenna AP delivers information and energy simultaneously to multiple single-antenna receivers. We characterize the capacity region for the ID receivers by maximizing their WSR subject to the sum-power constraint at the AP and a set of minimum harvested power constraints at EH receivers. This problem corresponds to a new type of WSRMax problem for MISO-BC with combined MaxLTCC and MinLTCCs, for which a new optimal algorithm is proposed by extending the BC-MAC duality and applying the ellipsoid method. Suboptimal algorithms with separate information and energy signal designs are also presented. The proposed algorithms provide useful insights on solving general WSRMax problems with both MaxLTCCs and MinLTCCs, and the established capacity region provides a performance upper bound on all practically implementable precoding/beamforming algorithms for SWIPT in MISO-BC.

\appendices
\section{Proof of Lemma \ref{lemma:1}}\label{appendix:proof lemma 1}
The first two conditions of Lemma \ref{lemma:1} can be proved by contradiction. For convenience, we define $\mv{S}_I \triangleq \sum_{i \in \mathcal{K_I}}\mv{S}_i$ as the sum of all information covariance matrices. Furthermore, it is sufficient to consider only the case that $\mv{S}_I$ can be expressed as
\begin{align}\label{eq:lemma 1 1}
\mv{S}_I = \sum^{N}_{n=1}\mu_n\mv{u}_n\mv{u}^{H}_n
\end{align}
where $\mv{u}_n \in \mathbb{C}^{N \times 1}$ is the $n$th eigenvector of $\mv{A}$, i.e., $[\mv{u}_1,\cdots,\mv{u}_N] = [\mv{U}_1,\mv{U}_2]$ from (\ref{eq:evd}), and $\mu_n$ is a non-negative real number, $n=1,\cdots,N$. As a result, $\sum_{i\in \mathcal{K_I}}\text{Tr}(\mv{A}\mv{S}_i) = \text{Tr}\left(\mv{A}\mv{S}_I\right)$ can be expressed as $\sum^{N}_{n=1}\mu_n\mv{u}^{H}_n\mv{A}\mv{u}_n$.

Suppose that $\mv{A} \nsucceq \mv{0}$, i.e., at least one of the eigenvalues of $\mv{A}$ is negative, and $g(\{\lambda_j\})$ has an upper bounded value or $g(\{\lambda_j\}) < +\infty$. Without loss of generality, we assume that $\mv{u}_k$ is one of the eigenvectors associated with the negative eigenvalues of $\mv{A}$. Then, it follows that $\mv{u}^{H}_k\mv{A}\mv{u}_k< 0$ and $\mu_k\mv{u}^{H}_k\mv{A}\mv{u}_k \rightarrow -\infty$ as $\mu_k$ approaches $+\infty$. Therefore, it is easy to verify that by choosing $\mv{S}_I$ based on (\ref{eq:lemma 1 1}) and $\mv{S}_i = \frac{1}{K_I}\mv{S}_I, \forall i \in \mathcal{K_I}$ with $\mu_k$ being large enough, $\mu_i, \forall i \neq k$ can be set to be arbitrary large such that we can achieve arbitrary large WSR for ID receivers without violating (\ref{eq:modified power}), which results in $g(\{\lambda_j\}) = +\infty$. Consequently, $\mv{A}$ has to be positive semi-definite. Since similar arguments can be used to verify the second condition of Lemma \ref{lemma:1}, the details are omitted for brevity.

Next, we prove the third condition of Lemma \ref{lemma:1}. Given $\mv{A}$ being positive semi-definite, it has a positive semi-definite square root, i.e., $\mv{A} = \mv{A}^{1/2}\mv{A}^{1/2}$. Therefore, $\text{Tr}(\mv{A}\mv{S}_I)$ and $\text{Tr}(\mv{A}\mv{S}_E)$ can be expressed as $\text{Tr}(\mv{A}^{1/2}\mv{S}_I\mv{A}^{1/2})$ and $\text{Tr}(\mv{A}^{1/2}\mv{S}_E\mv{A}^{1/2})$, respectively. Since both $\mv{A}^{1/2}\mv{S}_I\mv{A}^{1/2}$ and $\mv{A}^{1/2}\mv{S}_E\mv{A}^{1/2}$ are positive semi-definite, it follows that $\text{Tr}(\mv{A}^{1/2}\mv{S}_I\mv{A}^{1/2}) \geq 0$ and $\text{Tr}(\mv{A}^{1/2}\mv{S}_E\mv{A}^{1/2}) \geq 0$. Lemma \ref{lemma:1} is thus proved.

\section{Proof of Lemma \ref{lemma:2}}\label{appendix:proof lemma 2}
From the proof of Lemma \ref{lemma:1} in Appendix \ref{appendix:proof lemma 1}, $\sum_{i\in \mathcal{K_I}}\text{Tr}(\mv{A}\mv{S}_i) \geq 0$ and $\text{Tr}(\mv{A}\mv{S}_E) \geq 0$. Given the fact that only $\{\mv{S}_i\}$ is related to the information transfer, any solution to problem (\ref{eq:combined problem}) with $\text{Tr}(\mv{A}\mv{S}_E) > 0$ reduces the transmit power allocated to the information transfer and is thus suboptimal. Therefore, the optimal energy covariance matrix needs to satisfy $\text{Tr}(\mv{A}\mv{\bar{S}}_E )= 0$ equivalently $\mv{A}\mv{\bar{S}}_E = \mv{0}$, which means $\mv{\bar{S}}_E$ lies in the null space of $\mv{A}$. According to (\ref{eq:evd}), the vectors in $\mv{U}_2$ form the orthogonal basis for the null space of $\mv{A}$. Therefore, $\mv{\bar{S}}_E$ in general can be expressed as $\mv{\bar{S}}_E = \mv{U}_2\mv{\bar{E}}\mv{U}_2^{H}$, where $\mv{\bar{E}} \in \mathbb{C}^{(N-m)\times (N-m)}$ is any positive semi-definite matrix. Note that for case of $m = N$, i.e., $\mv{A}$ is of full rank, $\mv{A}\mv{\bar{S}}_E = \mv{0}$ implies that $\mv{\bar{S}}_E = \mv{0}$. Lemma \ref{lemma:2} is thus proved.

\section{Proof of Lemma \ref{lemma:4}}\label{appendix:proof lemma 4}
Without loss of generality, $\mv{S}_i$ can be expressed as
\begin{align}
\mv{S}_i & = \left[\mv{U}_1, \mv{U}_2\right]\left[\begin{array}{ll}
\mv{B}_i  &\mv{C}_i \\
\mv{C}^{H}_i  &\mv{D}_i
\end{array}\right]\left[\mv{U}_1, \mv{U}_2\right]^{H} \\
& = \mv{U}_1\mv{B}_i\mv{U}^{H}_1 + \mv{U}_1\mv{C}_i\mv{U}^{H}_2 + \mv{U}_2\mv{C}^{H}_i\mv{U}^{H}_1 + \mv{U}_2\mv{D}_i\mv{U}^{H}_2 \label{eq:decomposition}
\end{align}
where $\mv{B}_i \in \mathbb{C}^{m\times m}$, $\mv{D}_i \in \mathbb{C}^{(N-m)\times (N-m)}$ and  $\mv{C}_i \in \mathbb{C}^{m\times (N-m)}$, $\forall i \in \mathcal{K_I}$.  Note that $\mv{B}_i = \mv{B}^{H}_i$ and $\mv{D}_i = \mv{D}^{H}_i$. Since $\mv{U}_2$ lies in the null space of $\mv{A}$ (from Lemma \ref{lemma:2}) and consequently in the null space of $\mv{H}$ (from Lemma \ref{lemma:1}), it is observed that $r_i$ and $\sum_{i\in\mathcal{K_I}}\text{Tr}(\mv{A}\mv{S}_i)$ do not depend on $\mv{C}_i$ and $\mv{D}_i, \forall i \in \mathcal{K_I}$. Consequently, it is optimal to set $\mv{C}_i=\mv{0}$ and $\mv{D}_i = \mv{0}, \forall i \in \mathcal{K_I}$, and accordingly problem (\ref{eq:no energy}) with given $\{\lambda_{j}\}$ can be further simplified as (\ref{eq:full rank}) given in Lemma \ref{lemma:4}. With $\mv{\hat{A}}$ being full rank, problem (\ref{eq:full rank}) can be solved by the general BC-MAC duality as in \cite{UDD}, and results in unique rank-one information covariance matrices, i.e., $\mv{U}_1\mv{\bar{B}}_i\mv{U}^{H}_1, i \in \mathcal{K_I}$, the details of which are illustrated below.

Without loss of generality, we assume that $\alpha_1 \geq \alpha_2 \geq \cdots \geq \alpha_{K_I} \geq 0$. For the MISO-BC, its dual single-input multiple-output (SIMO) MAC consists of $K_I$ single-antenna transmitters that send independent information to one common receiver with $N$ antennas. At transmitter $i, i \in \mathcal{K_I}$, let $p_i$ be its transmit power, $s^{(m)}_i$ be a CSCG random variable representing its transmitted information signal, and $\mv{\hat{h}}^{H}_i$ be its channel vector to the receiver. Then the received signal in the dual SIMO-MAC is expressed as
\begin{align}
\mv{y}^{(m)} = \sum^{K_I}_{i=1}\mv{\hat{h}}^{H}_i\sqrt{p_i}s^{(m)}_i + \mv{z}^{(m)}
\end{align}
where $\mv{z}^{(m)}\thicksim\mathcal{CN}\left(\mathbf{0},\mv{\hat{A}}\right)$.

According to \cite{UDD}, problem (\ref{eq:full rank}) is equivalent to its dual MAC problem expressed as
\begin{align}
\mathop{\mathtt{Max.}}\limits_{\{p_i \geq 0\}} &
~~ \sum^{K_I}_{i=1}\alpha_ir^{(m)}_i  \nonumber \\
\mathtt{s.t.}
& ~~ \sum^{K_I}_{i=1}p_i \leq P_{A} \label{eq:mac}
\end{align}
where $r^{(m)}_i$ is given as
\begin{align}
\log_2\frac{\left|\mv{\hat{A}}+\sum^{i}_{k=1}p_k\mv{\hat{h}}_k\mv{\hat{h}}^{H}_k\right|}{\left|\mv{\hat{A}}+\sum^{i-1}_{k=1}p_k\mv{\hat{h}}_k\mv{\hat{h}}^{H}_k\right|}
\end{align}
due to the polymatroid structure of the MAC capacity region \cite{Mac}, and the user decoding order being determined by the magnitude of $\alpha_i$'s. Since problem (\ref{eq:mac}) is convex, it can be solved efficiently via standard convex optimization techniques. With the optimal solution to problem (\ref{eq:mac}), i.e., $\{p^{\star}_i\}$, at hand, the optimal receive beamforming vector can be obtained based on the minimum-mean-squared-error (MMSE) principle as
\begin{align}
\mv{v}^{*}_i = \frac{\left(\mv{\hat{A}}+\sum^{i-1}_{k=1}p^{*}_k\mv{\hat{h}}_k\mv{\hat{h}}^{H}_k\right)^{-1}\mv{\hat{h}}_i}{\left\|\left(\mv{\hat{A}}+\sum^{i-1}_{k=1}p^{*}_k\mv{\hat{h}}_k\mv{\hat{h}}^{H}_k\right)^{-1}\mv{\hat{h}}_i\right\|}, \forall i \in \mathcal{K_I}.
\end{align}

After obtaining the optimal solution of $\{\mv{v}^{*}_i, p^{*}_i\}$ for the uplink problem (\ref{eq:mac}), we then map the solution to $\left\{\mv{w}^{*}_i\right\}$ for the downlink problem (\ref{eq:full rank}). As shown in \cite{UDD}, since the downlink transmit beamforming vectors are identical to the uplink receive beamforming vectors up to certain scaling factors, $\mv{w}^{*}_i$ can be expressed as $\mv{w}^{*}_i = \sqrt{q^{*}_i}\mv{v}^{*}_i, \forall i \in \mathcal{K_I}$. Furthermore, the rate-tuples achieved for both the BC and MAC are identical. Therefore, the following set of equations can be utilized to find $\{q^{*}_i\}$:
\begin{align}
\log_2\left(1 + \frac{q^{*}_i|\mv{\hat{h}}^{H}_i\mv{v}^{*}_i|^2}{\sum^{K_I}_{k=i+1}q^{*}_k|\mv{\hat{h}}^{H}_i\mv{v}^{*}_k|^2+1}\right) = \log_2\frac{\left|\mv{\hat{A}}+\sum^{i}_{k=1}p^{*}_k\mv{\hat{h}}_k\mv{\hat{h}}^{H}_k\right|}{\left|\mv{\hat{A}}+\sum^{i-1}_{k=1}p^{*}_k\mv{\hat{h}}_k\mv{\hat{h}}^{H}_k\right|}, \forall i \in \mathcal{K_I}
\end{align}
i.e.,
\begin{align}
q^{*}_i = \frac{2^{\left(r^{(m)}_i\right)^{*}}-1}{|\mv{\hat{h}}^{H}_i\mv{v}^{*}_i|^2}\left(\sum^{K_I}_{k=i+1}q^{*}_k|\mv{\hat{h}}^{H}_i\mv{v}^{*}_k|^2+1\right), \forall i \in \mathcal{K_I}.
\end{align}
Finally, the optimal solution to problem (\ref{eq:full rank}) can be computed as
\begin{align}
\mv{\bar{B}}_i = \mv{w}^{*}_i(\mv{w}^{*}_i)^{H}, \forall i \in \mathcal{K_I}.
\end{align}
Lemma \ref{lemma:4} is thus proved.

\section{Proof of Lemma \ref{lemma:3}}\label{appendix:proof lemma 3}
Since the encoding order of the BC is the reverse of the decoding order of its dual MAC \cite{UDD}, which can be obtained from Section \ref{sec:BC MAC} while solving problem (\ref{eq:full rank}) and is assumed to be in accordance with the ID receiver index without loss of generality, problem (\ref{eq:no energy}) can now be written explicitly as
\begin{align}
\mathop{\mathtt{Max.}}\limits_{\left\{\mv{S}_i\right\},\mv{r}} &
~~ \sum^{K_I}_{i=1}\alpha_ir_i  \nonumber \\
\mathtt{s.t.}
& ~~ \sum_{i \in \mathcal{K_I}}\text{Tr}(\mv{A}\mv{S}_i) \leq P_{A} \nonumber \\
& ~~ \mv{S}_i \succeq \mv{0}, \forall i \in \mathcal{K_I} \label{eq:BC with order}
\end{align}
where $r_i$ is given by
\begin{align}\label{eq:rate}
r_i = \log_2\left(\frac{\sigma^2+\mv{h}^{H}_i\left(\sum^{K_I}_{k=i}\mv{S}_k\right)\mv{h}_i}{\sigma^2+\mv{h}^{H}_i\left(\sum^{K_I}_{k=i+1}\mv{S}_k\right)\mv{h}_i}\right).
\end{align}

The KKT optimality conditions of problem (\ref{eq:BC with order}) are given by
\begin{align}
&\frac{\partial\sum^{K_I}_{i=1}r_i}{\partial\mv{S}_i}  = \omega\mv{A} + \mv{\Psi}_i, \forall i \in \mathcal{K_I} \nonumber \\
&\omega\left[\sum_{i \in \mathcal{K_I}}\text{Tr}(\mv{A}\mv{S}_i)-P_{A}\right] = 0 \nonumber \\
&\text{Tr}\left(\mv{\Psi}_i\mv{S}_i\right) = 0, \forall i \in \mathcal{K_I} \label{eq:KKT P2S}
\end{align}
where $\omega \geq 0$ and $\mv{\Psi}_i \succeq \mv{0}, \forall i \in \mathcal{K_I}$ are the Lagrange multipliers associated with $\sum_{i \in \mathcal{K_I}}\text{Tr}(\mv{A}\mv{S}_i) \leq P_{A}$ and $\mv{S}_i \succeq \mv{0}, \forall i \in \mathcal{K_I}$, respectively.

This lemma can be proven by first showing that the duality gap between problem (\ref{eq:BC with order}) and its Lagrange dual problem is zero, and the KKT conditions given in (\ref{eq:KKT P2S}) are sufficient for a solution to be optimal for problem (\ref{eq:BC with order}). Since the proofs are similar that of \cite[Proposition 2]{UDD} and \cite[Proposition 3]{UDD}, they are omitted for brevity. To complete the proof, we need further show that the optimal value of problem (\ref{eq:no energy}) is equal to that of (P1) with any fixed encoding order, the details of which are given as follows.

We first consider a fixed encoding order for problem (P1) termed as problem (P1F), given by the optimal encoding order for problem (P2), which has been assumed to be the same as the ID receiver index order. Under this encoding order, the information rate for ID receiver $i$ is given in (\ref{eq:rate}).

Note that the optimal solution of problem (P1F) is a lower bound on the optimal solution of problem (P1). The KKT conditions of problem (P1F) can be written as
\begin{align}
&\frac{\partial\sum^{K_I}_{i=1}r_i}{\partial\mv{S}_i}  = \theta_0\mv{I} - \sum^{K_E}_{j=1}\theta_j\mv{G}_j + \mv{\Omega}_i, \forall i \in \mathcal{K_I} \\
&\frac{\partial\sum^{K_I}_{i=1}r_i}{\partial\mv{S}_E}  = \theta_0\mv{I} - \sum^{K_E}_{j=1}\theta_j\mv{G}_j + \mv{\Omega}_E\\
&\theta_j\left(\text{Tr}\left[(\mv{S}_I+\mv{S}_E)\mv{G}_j\right]-E_j\right) = 0, \forall j \in \mathcal{K_E} \\
&\theta_0\left(\text{Tr}\left[\mv{S}_I+\mv{S}_E\right]-P_{\text{sum}}\right) = 0
\end{align}
where $\{\theta_j\}^{K_E}_{j=1}$, $\theta_0$, $\{\mv{\Omega}_i\}$ and $\mv{\Omega}_E$ are the Lagrange multipliers with respect to the constraints in (\ref{eq:P1 1}), (\ref{eq:P1 2}) and (\ref{eq:P1 3}), respectively. For convenience, we define $\mv{S}_I \triangleq \sum_{i \in \mathcal{K_I}}\mv{S}_i$. When the optimal solution of problem (P1F) is achieved, we assume that the corresponding optimal primal and dual solutions are $\mv{\tilde{S}}_I$, $\mv{\tilde{S}}_E$, $\{\tilde{\theta}_j\}^{K_E}_{j=1}$, $\tilde{\theta}_0$, $\{\mv{\tilde{\Omega}}_i\}$ and $\mv{\tilde{\Omega}}_E$.

We now write the KKT conditions of problem (\ref{eq:no energy}) with $\lambda_0 = \tilde{\theta}_0$ and $\lambda_j = \tilde{\theta}_j, \forall j$, as follows:
\begin{align}
&\frac{\partial\sum^{K_I}_{i=1}r_i}{\partial\mv{S}_i}  = \omega\left(\tilde{\theta_0}\mv{I} - \sum^{K_E}_{j=1}\tilde{\theta_j}\mv{G}_j\right) + \mv{\Psi}_i, \forall i \in \mathcal{K_I} \nonumber \\
&\omega\left[\tilde{\theta_0}\text{Tr}\left(\mv{S}_I\right) - \sum^{K_E}_{j=1}\tilde{\theta}_j\text{Tr}\left(\mv{S}_I\mv{G}_j\right)-\tilde{\theta_0}P_{\text{sum}} + \sum^{K_E}_{j=1}\tilde{\theta}_jE_j\right] = 0. \label{eq:KKT need prove}
\end{align}
If we choose $\mv{S}_I = \mv{\tilde{S}}_I + \mv{\tilde{S}}^{\mv{\tilde{A}}}_E$, where $\mv{\tilde{S}}^{\mv{\tilde{A}}}_E = \mv{\tilde{U}}_1\mv{\tilde{U}}^{H}_1\mv{\tilde{S}}_E\mv{\tilde{U}}_1\mv{\tilde{U}}^{H}_1$ and $\mv{\tilde{U}}_1$ consists of the orthogonal basis defining the range of $\mv{\tilde{A}} = \tilde{\theta_0}\mv{I} - \sum^{K_E}_{j=1}\tilde{\theta_j}\mv{G}_j$ similar as that in (\ref{eq:evd}), $\omega=1$, and $\mv{\Psi}_i = \mv{\tilde{\Omega}}_i, \forall i \in \mathcal{K_I}$, then KKT conditions in (\ref{eq:KKT need prove}) are satisfied. According to the fact that the duality gap between problem (\ref{eq:BC with order}) and its Lagrange dual problem is zero, $\mv{\tilde{S}}_I + \mv{\tilde{S}}^{\mv{\tilde{A}}}_E$ is optimal for problem (\ref{eq:no energy}). Therefore, the optimal value of problem (\ref{eq:no energy}) with $\lambda_0 = \tilde{\theta}_0$ and $\lambda_j = \tilde{\theta}_j, \forall j$, is equal to the optimal value of problem (P1F). Therefore, the optimal value of problem (P1F), which is a lower bound on the optimal value of problem (P1), meets the optimal value of problem (\ref{eq:no energy}) with $\lambda_0 = \tilde{\theta}_0$ and $\lambda_j = \tilde{\theta}_j, \forall j$, which is an upper bound on the optimal value of problem (P1). The above results also imply that the minimum value of $g(\{\lambda_j\})$ over $\{\lambda_j\}$ is achieved when $\lambda_0 = \tilde{\theta}_0$ and $\lambda_j = \tilde{\theta}_j, \forall j$. The proof of Lemma \ref{lemma:3} thus follows.

\section{Proof of Lemma \ref{lemma:5} and Lemma \ref{lemma:7}}\label{appendix:proof lemma 7}
We start with proving Lemma \ref{lemma:5}. It is first observed that the condition $\text{Null}(\mv{A}) \subseteq \text{Null}\left(\mv{H}\right)$ is equivalent to that $\mv{v}_i \nsubseteq \text{Null}(\mv{A}), \forall i \leq t$, where $t$ denotes the rank of matrix $\mv{H}$, and $\mv{v}_i, i=1,\cdots,t$, denote the left singular vectors of $\mv{H}$ corresponding to its non-zero singular values. Furthermore, given $\mv{A} \succeq \mathbf{0}$, the condition $\mv{v}_i \nsubseteq \text{Null}(\mv{A}), \forall i \leq t$, can be further expressed as $\mv{v}^{H}_i \mv{A}\mv{v}_i> 0, \forall i \leq t$. The proof of Lemma \ref{lemma:5} thus follows.

Next, we proceed to show Lemma \ref{lemma:7}. For the purpose of illustration, we define $\mv{F}(\mv{\lambda}) = - \lambda_0\mv{I} + \sum^{K_E}_{j=1}\lambda_j\mv{G}_j$, where $\mv{\lambda} = [\lambda_0,\cdots,\lambda_{K_E}]^T$. Then the constraint in (\ref{eq:p3 2}) is equivalent to $\mv{F}(\mv{\lambda}) \preceq \mv{0}$. First, the semi-definite constraint $\mv{F}(\mv{\lambda}) \preceq \mv{0}$ can be equivalently expressed as a scalar inequality constraint as
\begin{align}
f(\boldsymbol\lambda) \triangleq \lambda_{\text{max}}\left(\mv{F}(\boldsymbol\lambda)\right) \leq 0
\end{align}
where $\lambda_{\text{max}}\left(\cdot\right)$ denotes the largest eigenvalue. Thus, the above constraint can be equivalently written as
\begin{align}
f(\boldsymbol\lambda) = \max_{\|\mv{z}\|^2=1}\mv{z}^{H}\mv{F}(\boldsymbol\lambda)\mv{z} \leq 0.
\end{align}

Given a query point $\boldsymbol\lambda_1 = [\lambda_{0,1},\cdots,\lambda_{K_E,1}]^T$, we can find the normalized eigenvector $\mv{z}_1$ of $\mv{F}(\boldsymbol\lambda_1)$ corresponding to $\lambda_{\text{max}}\left(\mv{F}(\boldsymbol\lambda_1)\right)$. Consequently, we can determine the value of the scalar constraint at a query point as $f(\boldsymbol\lambda_1) = \mv{z}_1^{H}\mv{F}(\boldsymbol\lambda_1)\mv{z}_1 = \lambda_{\text{max}}\left(\mv{F}(\boldsymbol\lambda_1)\right)$. To obtain a subgradient, we show the following:
\begin{align}
f(\boldsymbol\lambda) - f(\boldsymbol\lambda_1) & = \max_{\|\mv{z}\|^2=1}\mv{z}^{H}\mv{F}(\boldsymbol\lambda)\mv{z} - \mv{z}_1^{H}\mv{F}(\boldsymbol\lambda_1)\mv{z}_1 \\
& \geq \mv{z}_1^{H}\left[\mv{F}(\boldsymbol\lambda_1) - \mv{F}(\boldsymbol\lambda)\right]\mv{z}_1\\
& = \|\mv{z}_1\|^2(\lambda_0 - \lambda_{0,1}) - \sum^{K_E}_{j=1}\left(\mv{z}_1^{H}\mv{G}_j\mv{z}_1\right)(\lambda_j - \lambda_{j,1})
\end{align}
where the last equality follows from the affine structure of the semi-definite constraint. Lemma \ref{lemma:7} thus follows.

\end{document}